\newtheorem{lemma}{Lemma}
\newcommand{\eref}[1]{(\ref{#1})}
\newcommand{\sref}[1]{Section~\ref{#1}}
\newcommand{\fref}[1]{Figure~\ref{#1}}
\newcommand{\cref}[1]{Constraint~\ref{#1}}
\newcommand{\lref}[1]{Lemma~\ref{#1}}
\newcommand{\tref}[1]{Table~\ref{#1}}
\newcommand{\algref}[1]{Algorithm~\ref{#1}}
\begin{document}

\title{Instantaneous GNSS Ambiguity Resolution and Attitude Determination via Riemannian Manifold Optimization}

\author{Xing Liu}

\author{Tarig Ballal}
\member{Member, IEEE}

\author{Mohanad Ahmed}

\author{Tareq Y. Al-Naffouri}
\member{Senior Member, IEEE}
\affil{King Abdullah University of Science and Technology (KAUST), Thuwal, Saudi Arabia}


\authoraddress{The authors are with the Division of Computer, Electrical and Mathematical Sciences, and Engineering, King Abdullah University of Science and Technology, Thuwal 23955-6900, Saudi Arabia (e-mail: xing.liu@kaust.edu.sa; tarig.ahmed@kaust.edu.sa; mohanad.ahmed@kaust.edu.sa; tareq.alnaffouri@kaust.edu.sa).}


\maketitle

\begin{abstract}
We present an ambiguity resolution method for Global Navigation Satellite System (GNSS)-based attitude determination. A GNSS attitude model with nonlinear constraints is used to rigorously incorporate a priori information. Given the characteristics of the employed nonlinear constraints, we formulate GNSS attitude determination as an optimization problem on a manifold. Then, Riemannian manifold optimization algorithms are utilized to aid ambiguity resolution based on a proposed decomposition of the objective function. The application of manifold geometry enables high-quality float solutions that are critical to reinforcing search-based integer ambiguity resolution in terms of efficiency, availability, and reliability. The proposed approach is characterized by a low computational complexity and a high probability of resolving the ambiguities correctly. The performance of the proposed ambiguity resolution method is tested through a series of simulations and real experiments. Comparisons with the principal benchmarks indicate the superiority of the proposed method as reflected by the high ambiguity resolution success rates.
\end{abstract}

\begin{IEEEkeywords}
GNSS attitude determination, carrier phase, ambiguity resolution, Riemannian manifolds, Stiefel manifold.
\end{IEEEkeywords}

\

\

\

\

\

\

\

\section{INTRODUCTION}
Global navigation satellite system (GNSS)-based attitude determination has recently received much attention as it plays a critical role in various navigation, guidance, and control applications \cite{9292084,9050874,ardalan2015iterative, 5975219, 8869594,hofmann2007gnss}. Nowadays, GNSS attitude determination has been ubiquitously explored in numerous land, airborne, and maritime scenarios \cite{LiJul2004, 4408595,9744503,6129643,ChiOct2014, 6200891}. The goal of GNSS attitude determination is to estimate a vehicle's orientation with respect to a reference coordinate \cite{hofmann2012global,hofmann2007gnss}, utilizing multiple GNSS antennas/receivers rigidly mounted on the body frame. Compared with the other sensors applied in attitude determination, such as gyroscopes or star-trackers, a GNSS receiver offers several advantages, including the driftless characteristic, low power consumption, and minor maintenance requirements \cite{madsen2004robust, crassidis1997new}. 

A GNSS receiver can generate pseudo-range and carrier-phase observations based on the signals transmitted from the navigation satellites. The measurement accuracy of the pseudo-range observations is two orders of magnitude lower than that of carrier-phase observations \cite{giorgi2013low}. It is essential to employ the precise GNSS data, the carrier phase, to perform a high-accuracy attitude estimate. The main challenge to fully utilize carrier-phase observations is to successfully resolve the unknown integer parts (number of whole cycles), a process usually referred to as integer ambiguity resolution. The integer ambiguities can be resolved by employing the change of the receiver-satellite geometry provoked by the vehicle's movement in the techniques known as motion-based methods \cite{crassidis1999global, ChuMay2001, wang2010motion}. However, this class of methods demands multi-epoch GNSS data such that they can not meet the requirement of real-time applications. More recently, instantaneous ambiguity resolution approaches have been developed to solve the problem by searching in the float (attitude/position) or integer domains \cite{teunissen1993least,chang2005mlambda,LiJul2004,PurApr2010, 9110131,8955972,liu2018gnss, 9650516,liu2021attitude}, requiring only a single-epoch GNSS observation.

The least-squares ambiguity decorrelation adjustment (LAMBDA) method \cite{teunissen1993least,teunissen1994new} has become the standard approach for GNSS ambiguity resolution with unconstrained or linearly constrained models \cite{teunissen2012affine, giorgi2013low, 6491499}. It can efficiently recover the ambiguities by searching for the integers within an ellipsoidal region. For unconstrained or linearly constrained models, LAMBDA solutions are optimal in terms of the ambiguity resolution success rate, i.e., it offers the highest probability of resolving the ambiguities correctly \cite{teunissen1999optimality}. However, this technique fails to take advantage of a priori knowledge of antenna geometry, which introduces nonlinear (orthonormality) constraints to the optimization problem.

Constrained LAMBDA (C-LAMBDA) \cite{teunissen2010integer} and multivariate constrained LAMBDA (MC-LAMBDA) methods \cite{giorgi2012instantaneous} were proposed to leverage the priori knowledge about the antenna relative positions for single- and multi-baseline attitude models, respectively. The literature shows that these two techniques can significantly improve the success-rate performance compared to the LAMBDA method \cite{teunissen2010integer, 6491499}. Resolving the unknown carrier-phase ambiguities correctly allows for accurate attitude determination, whereas the orthonormality constraints facilitate high ambiguity resolution success rates and reliable attitude estimation \cite{teunissen2012affine}. Incorporating the nonlinear constraints into the optimization problem results in a more complex search space that is not ellipsoidal anymore, increasing the complexity of the search process significantly \cite{giorgi2010carrier, 6491499}. The performance of the C-LAMBDA or MC-LAMBDA approaches relies upon the quality of the (initial) float solutions, that is, the least-squares estimations with the integer constraint of ambiguities ignored. Under satellite-deprived environments such as urban canyons, searching the integers requires a huge computational load due to the poor float solutions initially obtained through unconstrained least-squares (LS) estimation \cite{liu2021constrained}.

Based on a specific relaxation of the nonlinear constraints pertaining to the antenna-array geometry, the affine-constrained attitude model (AC-AM) has been proposed as a compromise between the unconstrained LAMBDA and MC-LAMBDA methods \cite{teunissen2012affine, giorgi2013low}. The AC-AM relaxes the orthonormality constraints to linear ones through an affine transformation. Therefore, if the number of baselines is greater than the dimension of the range of the baseline matrix, the AC-AM provides a better float solution compared to the unconstrained model such that it outperforms the standard LAMBDA in the terms of the ambiguity resolution success rate\cite{teunissen2012affine}. On the other hand, the AC-AM is less robust than the orthonormality-constrained attitude model (OC-AM) used in the MC-LAMBDA method; thus, high computational efficiency but lower success rates followed. Given a large number of baselines, this method can generally provide high success rate, allowing it to be a fast alternative to the MC-LAMBDA method \cite{giorgi2013low}. However, with a limited number of baselines, there is still much room for success rate improvement.

In this article, we propose an ambiguity resolution method for GNSS attitude determination via Riemannian manifold optimization. As the solution set of the attitude matrix is a manifold, we formulate GNSS attitude determination as an optimization over Riemannian manifolds. An earlier study of manifold geometries has demonstrated the ability to develop efficient Riemannian algorithms with the capacity of offering accurate attitude estimations using the given carrier-phase ambiguities \cite{Ahmed2020RieOpt}. 
We demonstrate that Riemannian manifold optimization can also be utilized in the ambiguity resolution process to significantly improve the float solution. In \sref{sec: IntSol}, we will demonstrate how imposing the geometric constraints can improve not only the rotation matrix estimation, but also the float estimates of the ambiguities. This performance gain is attributed to the improved robust model adopted to carry out the initial float estimation operation, which assists the integer ambiguity search operation that follows.

To refine the search efficiency, a decomposition of the objective function is proposed that also leverages the improved float solution, which shrinks the search space and accelerates the ambiguity resolution process. At the expense of the computation burden of one manifold optimization, the loose form of the proposed approach can beat the AC-AM in terms of success rate, and its tight-form can outperform the MC-LAMBDA method in terms of computational efficiency. To the best of the author's knowledge, this is the first work to apply manifold optimization to assist the integer ambiguity resolution process in GNSS attitude determination.

The structure of this paper is as follows. \sref{sec: model} describes the various GNSS attitude models and the related work in the literature. \sref{sec:ove} briefly introduces the necessary ingredients for optimization algorithms on Riemannian manifolds. \sref{sec:proposed} presents the proposed attitude determination algorithm. \sref{sec: IntSol} addresses the formulation of the intermediate float solution used to improve ambiguity resolution, and presents in detail the proposed method to obtain the float solution using Riemannian manifold optimization. \sref{sec:ambres} describes the proposed ambiguity resolution method aided by the float solution, including a proposed decomposition and search strategy. Finally, \sref{sec: SimExp} presents numerical results, comparing the performance of the proposed approach and a number of benchmark methods using different evaluation criteria.

\section{Related Work}
\label{sec: model}
\subsection{The Unconstrained Attitude Determination}
Consider a vehicle equipped with $\mathcal{A}+1$ GNSS antennas that track $\mathcal{S}+1$ satellites simultaneously. The double-difference (DD) operation is applied to cancel out the common-mode errors, such as the receiver and satellite clock biases, hardware offsets, and atmospheric delays. In a single-frequency and single-epoch case, the GNSS attitude observation equations are given by
\begin{equation}
\begin{aligned}
  {\bm{\Psi}} &= {\mathbf{H}}{\mathbf{X}} + \mathbf{W}\mathbf{N} + \bm{\Pi},\\
  {\mathbf{P}} &= {\mathbf{H}}\mathbf{X} + \bm{\Xi},
  \end{aligned}
  \label{eq:model1}
\end{equation}
where ${\bm{\Psi}} \in {\mathbb{R}^{\mathcal{S} \times \mathcal{A}}}$ and ${\mathbf{P}} \in {\mathbb{R}^{\mathcal{S} \times \mathcal{A}}}$ denote the DD carrier-phase and pseudo-range matrices, respectively, ${\mathbf{H}} \in {\mathbb{R}^{\mathcal{S} \times 3}}$ is composed of the satellite line-of-sight vectors, ${\mathbf{W}} \in {\mathbb{R}^{\mathcal{S} \times \mathcal{S}}}$ contains the wavelength of the GNSS frequency, the columns of ${\mathbf{X}} \in {\mathbb{R}^{3 \times \mathcal{A}}}$ represent the unknown baseline coordinates in the chosen reference frame, $\mathbf{N} \in {\mathbb{Z}^{\mathcal{S} \times \mathcal{A}}}$ denotes the unknown DD integer ambiguities, and $\bm{\Pi} \in {\mathbb{R}^{\mathcal{S} \times \mathcal{A}}}$ and $\bm{\Xi} \in {\mathbb{R}^{\mathcal{S} \times \mathcal{A}}}$ are unmodelled errors and noise. 

The model \eref{eq:model1} is called the unconstrained attitude model (UC-AM) since it disregards the prior knowledge of the baseline matrix. Note that ``unconstrained'' is only in terms of the antenna-array geometry; the integer constraints are given full consideration. To estimate the unknown baseline coordinates and integer ambiguities, the problem can be formulated as
\begin{equation}
\mathop {\min }\limits_{{\mathbf{X}} \in {\mathbb{R}^{3 \times \mathcal{A}}}, {\mathbf{N}} \in {\mathbb{Z}^{\mathcal{S} \times\mathcal{A}}}}\left\| {\operatorname{vec}\left( {{\mathbf{Y}} - {\mathbf{A}}{\mathbf{X}} - {\mathbf{BN}}} \right)} \right\|_{\mathbf{Q}_{\mathbf{Y\!Y}}^{-1}}^2,
\label{eq:unam}
\end{equation}
where
\[{\mathbf{Y}} \triangleq \left[ {\begin{array}{*{20}{c}}
  {\bm{\Psi}} \\ 
  {\bf{P}}
\end{array}} \right],
{\mathbf{A}} \triangleq \left[ {\begin{array}{*{20}{c}}
  {\mathbf{H}} \\ 
  {\mathbf{H}} 
\end{array}} \right],
{\mathbf{B}} \triangleq \left[ {\begin{array}{*{20}{c}}
  \mathbf{W} \\ 
  \mathbf{O}
\end{array}} \right].\]
The matrices ${\mathbf{A}}$ and ${\mathbf{B}}$ link the GNSS measurements and the unknown parameters. The notation $\operatorname{vec}(\cdot)$ represents the vectorization operation, $\left\| (\cdot) \right\|_{{\mathbf{Q}}_{\mathbf{Y\!Y}}^{-1}}^2 = (\cdot)^{\text{T}}{\mathbf{Q}}_{\mathbf{Y\!Y}}^{-1}(\cdot)$, and ${\mathbf{Q}_{\mathbf{Y\!Y}}}$ is the covariance matrix of $\operatorname{vec}({\mathbf{Y}})$, that is
\begin{equation*}
\begin{aligned}
{\mathbf{Q}_{\mathbf{Y\!Y}}} \!=\! \mathbb{E}\!\!\left[ \operatorname{vec}({\mathbf{Y}} \!-\! \mathbb{E}({\mathbf{Y}}) )\left[\operatorname{vec}({\mathbf{Y}} \!-\! \mathbb{E}({\mathbf{Y}}) )\right]^{\text{T}}\right],
\end{aligned}
\end{equation*}
where $\mathbb{E}( \cdot )$ is the expectation operator.

To determine the solution of \eref{eq:unam}, the following orthogonal decomposition can be applied to simplify the problem \cite{giorgi2013low}:
\begin{equation}
\resizebox{.9\hsize}{!}{
$
\begin{aligned}
&\mathop {\min }\limits_{{\mathbf{X}} \in {\mathbb{R}^{3 \times \mathcal{A}}}, {\mathbf{N}} \in {\mathbb{Z}^{\mathcal{S} \times\mathcal{A}}}} \!\left\| {\operatorname{vec} \! \left( {{\mathbf{Y}} \!-\! {\mathbf{A}}{\mathbf{X}} \!-\! {\mathbf{BN}}} \right)}\! \right\|_{\mathbf{Q}_{\mathbf{Y\!Y}}^{-1}}^2
= \left\| {\operatorname{vec} \! \left( {\mathbf{E}}_{_\text{UC}} \right)}\! \right\|_{\mathbf{Q}_{\mathbf{Y\!Y}}^{-1}}^2 \\ &+ \!\mathop {\min }\limits_{{\mathbf{N}} \in {\mathbb{Z}^{\mathcal{S} \times\mathcal{A}}}} \!\! \left(\!\! \left\| {\operatorname{vec} \! \left( \! {\mathbf{N}} \! - \! {\mathbf{\hat N}_{_\text{UC}}} \!\right)} \!\right\|_{\mathbf{Q}_{{\mathbf{\hat N}_{_\text{UC}}\!\mathbf{\hat N}_{_\text{UC}}}}^{-1}}^2\! \!+\!
\mathop {\min }\limits_{{\mathbf{X}} \in {\mathbb{R}^{3 \times \mathcal{A}}}}\!\left\| {\operatorname{vec} \!\! \left(\!{\mathbf{X}} \!-\! {\mathbf{\hat X}_{_\text{UC}}}\!\!\left(\!{\mathbf{N}}\!\right) \!\!\right)} \!\right\|_{\mathbf{Q}_{{\mathbf{\hat X}_{_\text{UC}}}\!\left(\!{\mathbf{N}}\!\right){\mathbf{\hat X}_{_\text{UC}}}\!\left(\!{\mathbf{N}}\!\right)}^{-1}}^2 \!\right).
\label{eq:decom1}
\end{aligned}
$}
\end{equation}
In \eref{eq:decom1}, ${\mathbf{E}}_{_\text{UC}} = {{\mathbf{Y}} \!-\! {\mathbf{A}}{\mathbf{\hat X}}_{_\text{UC}}  \!-\! {\mathbf{B \hat N}}_{_\text{UC}} }$ is the residual matrix of the least-squares solution, i.e., the float solution without the integer constraints given by
\begin{equation}
\resizebox{.9\hsize}{!}{
$
\begin{aligned}
{{{\mathbf{\hat X}}}_{_\text{UC}} }, {{{\mathbf{\hat N}}}_{_\text{UC}} }\! \! =\! \!\mathop {\arg\min }\limits_{{\mathbf{X}} \in {\mathbb{R}^{3 \times \mathcal{A}}}, {\mathbf{N}} \in {\mathbb{R}^{\mathcal{S} \times\mathcal{A}}}} \!\left\| {\operatorname{vec} \! \left( {{\mathbf{Y}} \!-\! {\mathbf{A}}{\mathbf{X}} \!-\! {\mathbf{BN}}} \right)}\! \right\|_{\mathbf{Q}_{\mathbf{Y\!Y}}^{-1}}^2,
\label{eq:float1}
\end{aligned}
$}
\end{equation}
whose covariance matrixes are $\mathbf{Q}_{_{{\mathbf{\hat N}_{_{\text{UC}}}}\!{\mathbf{\hat N}_{_\text{UC}}}}}$, $\mathbf{Q}_{_{{\mathbf{\hat X}_{_\text{UC}}}\!{\mathbf{\hat X}_{_\text{UC}}}}}$, $\mathbf{Q}_{_{{\mathbf{\hat N}_{_\text{UC}}}\!{\mathbf{\hat X}_{_\text{UC}}}}}$, and $\mathbf{Q}_{_{{\mathbf{\hat X}_{_\text{UC}}}\!{\mathbf{\hat N}_{_\text{UC}}}}}$. The other variables in \eref{eq:decom1} are expressed as
\begin{equation}
\resizebox{.9\hsize}{!}{
$
\begin{aligned}
&\operatorname{vec} \!\!  \left(\! {\mathbf{\hat X}_{_\text{UC}}}\!\!\left(\!{\mathbf{N}}\!\right)\!\!  \right) \!= \!\operatorname{vec} \!\!  \left(\!{\mathbf{\hat X}_{_\text{UC}}}\! \right) \! -\!  \mathbf{Q}_{_{{\mathbf{\hat X}_{_\text{UC}}}\!{\mathbf{\hat N}_{_\text{UC}}}}}\! \!  \mathbf{Q}_{_{{\mathbf{\hat N}_{_{\text{UC}}}}\!{\mathbf{\hat N}_{_\text{UC}}}}}^{-1}\!\! \operatorname{vec} \! \!\left( \! \mathbf{\hat N}_{_{\text{UC}}} \!\! -\!  \mathbf{N} \!\!  \right)\!,\\
&\mathbf{Q}_{_{{\mathbf{\hat X}_{_\text{UC}}}\!\left({\mathbf{N}}\right){\mathbf{\hat X}_{_\text{UC}}}\!\left({\mathbf{N}}\right)}} \!\! \!=\! \mathbf{Q}_{_{{\mathbf{\hat X}_{_\text{UC}}}\!{\mathbf{\hat X}_{_\text{UC}}}}} \!\!-\! \mathbf{Q}_{_{{\mathbf{\hat X}_{_\text{UC}}}\!{\mathbf{\hat N}_{_\text{UC}}}}}\!\! \mathbf{Q}_{_{{\mathbf{\hat N}_{_{\text{UC}}}}\!{\mathbf{\hat N}_{_\text{UC}}}}}^{-1} \!\! \mathbf{Q}_{_{{\mathbf{\hat N}_{_\text{UC}}}\!{\mathbf{\hat X}_{_\text{UC}}}}}.
\end{aligned}
$}
\end{equation}
For further details, refer to \cite{giorgi2013low}.

The residual term is independent of the parameters. Since the UC-AM disregards the constraints on $\mathbf{X}$, we can let ${\mathbf{X}} = {\mathbf{\hat X}_{_\text{UC}}}\!\!\left(\!{\mathbf{N}}\!\right)$ so that the third term on the right-hand side of \eref{eq:decom1} is equal to zero, i.e., minimizing \eref{eq:decom1}. Hence, the original optimization in \eref{eq:unam} is equivalent to the following integer least-squares (ILS) problem
\begin{equation}
    \mathop {\min }\limits_{{\mathbf{N}} \in {\mathbb{Z}^{\mathcal{S} \times\mathcal{A}}}} \! \left\| {\operatorname{vec} \! \left( \! {\mathbf{N}} \! - \! {\mathbf{\hat N}_{_\text{UC}}} \!\right)} \!\right\|_{\mathbf{Q}_{{\mathbf{\hat N}_{_\text{UC}}\!\mathbf{\hat N}_{_\text{UC}}}}^{-1}}^2.
    \label{eq:lambda1}
\end{equation}
The LAMBDA method can solve \eref{eq:lambda1} efficiently, with ${\mathbf{\hat N}_{_\text{UC}}}$ being the center of the search space (an ellipsoidal set). Once the integer ambiguities are resolved, the recovered DD carrier phase can be utilized to estimate the baseline coordinates.

\subsection{The Affine-constrained Attitude Determination}
One can accurately measure the antenna-array geometry because the GNSS antennas are firmly mounted on the vehicle. Therefore, the antenna-array coordinates in the body frame can be considered as known parameters that can be incorporated in GNSS attitude determination. In \cite{teunissen2012affine, giorgi2013low}, to leverage the priori knowledge, the authors take advantage of an affine transformation, which converts the baseline coordinates from the body frame to the reference frame through
\begin{equation}
{{\mathbf{X}}} = {{\mathbf{R}}}{{\mathbf{X}_b}}, \quad {\mathbf{R}} \in {\mathbb{R}^{3 \times q}},
\label{eq:link}
\end{equation}
where ${\mathbf{X}_b} \in {\mathbb{R}^{q \times \mathcal{A}}}$, and $q = \operatorname{min}(3,\mathcal{A})$. Then, \eref{eq:model1} and \eref{eq:link} establish the affine-constrained attitude model (AC-AM).

Based on the affine transformation, the attitude determination problem can be expressed as \begin{equation}
\mathop {\min }\limits_{{\mathbf{R}} \in {\mathbb{R}^{3 \times \mathcal{A}}}, {\mathbf{N}} \in {\mathbb{Z}^{\mathcal{S} \times\mathcal{A}}}}\left\| {\operatorname{vec}\left( {{\mathbf{Y}} - {\mathbf{AR}}{\mathbf{X}_b} - {\mathbf{BN}}} \right)} \right\|_{\mathbf{Q}_{\mathbf{Y\!Y}}^{-1}}^2.
\label{eq:acam}
\end{equation}
Similar to \eref{eq:decom1}, the following orthogonal decomposition is used 
\begin{equation}
\resizebox{.90\hsize}{!}{
$
\begin{aligned}
&\mathop {\min }\limits_{{\mathbf{R}} \in {\mathbb{R}^{3 \times q}}, {\mathbf{N}} \in {\mathbb{Z}^{\mathcal{S} \times\mathcal{A}}}} \!\left\| {\operatorname{vec} \! \left( {{\mathbf{Y}} \!-\! {\mathbf{AR}}{\mathbf{X}_b} \!-\! {\mathbf{BN}}} \right)}\! \right\|_{\mathbf{Q}_{\mathbf{Y\!Y}}^{-1}}^2
= \left\| {\operatorname{vec} \! \left( {\mathbf{E}}_{_\text{AC}} \right)}\! \right\|_{\mathbf{Q}_{\mathbf{Y\!Y}}^{-1}}^2 \\ &+ \!\mathop {\min }\limits_{{\mathbf{N}} \in {\mathbb{Z}^{\mathcal{S} \times\mathcal{A}}}} \!\! \left(\!\! \left\| {\operatorname{vec} \! \left( \! {\mathbf{N}} \! - \! {\mathbf{\hat N}_{_\text{AC}}} \!\right)} \!\right\|_{\mathbf{Q}_{{\mathbf{\hat N}_{_\text{AC}}\!\mathbf{\hat N}_{_\text{AC}}}}^{-1}}^2\!\!\! +\!
\mathop {\min }\limits_{{\mathbf{R}} \in {\mathbb{R}^{3 \times q}}}\!\left\| {\operatorname{vec} \!\! \left(\!{\mathbf{R}} \!-\! {\mathbf{\hat R}_{_\text{AC}}}\!\!\left(\!{\mathbf{N}}\!\right)\! \!\right)} \!\right\|_{\mathbf{Q}_{{\mathbf{\hat R}_{_\text{AC}}}\!\left({\mathbf{N}}\right){\mathbf{\hat R}_{_\text{AC}}}\!\left({\mathbf{N}}\right)}^{-1}}^2 \!\right),
\label{eq:decom2}
\end{aligned}
$}
\end{equation}
where ${\mathbf{E}}_{_\text{AC}} = {{\mathbf{Y}} \!-\! {\mathbf{A}}{\mathbf{\hat R}}_{_\text{AC}}{\mathbf{X}_b}  \!-\! {\mathbf{B \hat N}}_{_\text{AC}} }$ is the least-squares residual matrix of the float solution given by
\begin{equation}
\resizebox{.86\hsize}{!}{
$
\begin{aligned}
{{{\mathbf{\hat R}}}_{_\text{AC}} }, {{{\mathbf{\hat N}}}_{_\text{AC}} }\! \! =\! \!\mathop {\arg\min }\limits_{{\mathbf{R}} \in {\mathbb{R}^{3 \times \mathcal{A}}}, {\mathbf{N}} \in {\mathbb{R}^{\mathcal{S} \times\mathcal{A}}}} \!\left\| {\operatorname{vec} \! \left( {{\mathbf{Y}} \!-\! {\mathbf{AR}}{\mathbf{X}_b} \!-\! {\mathbf{BN}}} \right)}\! \right\|_{\mathbf{Q}_{\mathbf{Y\!Y}}^{-1}}^2,
\label{eq:float2}
\end{aligned}
$}
\end{equation}
with the covariance matrix $\mathbf{Q}_{_{{\mathbf{\hat N}_{_{\text{AC}}}}\!{\mathbf{\hat N}_{_\text{AC}}}}}$, $\mathbf{Q}_{_{{\mathbf{\hat R}_{_\text{AC}}}\!{\mathbf{\hat R}_{_\text{AC}}}}}$, $\mathbf{Q}_{_{{\mathbf{\hat N}_{_\text{AC}}}\!{\mathbf{\hat R}_{_\text{AC}}}}}$, and $\mathbf{Q}_{_{{\mathbf{\hat R}_{_\text{AC}}}\!{\mathbf{\hat N}_{_\text{AC}}}}}$. The other variables in \eref{eq:decom2} are given by
\begin{equation}
\resizebox{.86\hsize}{!}{
$
\begin{aligned}
&\operatorname{vec} \!\!  \left(\!\!{\mathbf{\hat R}_{_\text{AC}}}\!\!\left(\!{\mathbf{N}}\!\right)\!\! \right) \!= \!\operatorname{vec} \!\!  \left(\!{\mathbf{\hat R}_{_\text{AC}}}\! \right) \!-\! \mathbf{Q}_{_{{\mathbf{\hat R}_{_\text{AC}}}\!{\mathbf{\hat N}_{_\text{AC}}}}}\!\! \mathbf{Q}_{_{{\mathbf{\hat N}_{_{\text{AC}}}}\!{\mathbf{\hat N}_{_\text{AC}}}}}^{-1}\!\!\operatorname{vec} \!\!  \left(\!\!\mathbf{\hat N}_{_{\text{AC}}}\! \!- \!\mathbf{N}\!\!\right)\!,\\
&\mathbf{Q}_{_{{\mathbf{\hat R}_{_\text{AC}}}\!\left(\!{\mathbf{N}}\!\right){\mathbf{\hat R}_{_\text{AC}}}\!\left(\!{\mathbf{N}}\!\right)}} \!\! \!=\! \mathbf{Q}_{_{{\mathbf{\hat R}_{_\text{AC}}}\!{\mathbf{\hat R}_{_\text{AC}}}}} \!\!-\! \mathbf{Q}_{_{{\mathbf{\hat R}_{_\text{AC}}}\!{\mathbf{\hat N}_{_\text{AC}}}}}\!\! \mathbf{Q}_{_{{\mathbf{\hat N}_{_{\text{AC}}}}\!{\mathbf{\hat N}_{_\text{AC}}}}}^{-1} \!\! \mathbf{Q}_{_{{\mathbf{\hat N}_{_\text{AC}}}\!{\mathbf{\hat R}_{_\text{AC}}}}}.
\end{aligned}
$}
\end{equation}
Finally, the optimization problem in \eref{eq:acam} is identical to the following minimization \cite{teunissen2012affine}
\begin{equation}
    \mathop {\min }\limits_{{\mathbf{N}} \in {\mathbb{Z}^{\mathcal{S} \times\mathcal{A}}}} \! \left\| {\operatorname{vec} \! \left( \! {\mathbf{N}} \! - \! {\mathbf{\hat N}_{_\text{AC}}} \!\right)} \!\right\|_{\mathbf{Q}_{{\mathbf{\hat N}_{_\text{AC}}\!\mathbf{\hat N}_{_\text{AC}}}}^{-1}}^2.
    \label{eq:lambda2}
\end{equation}
Since $\mathbf{Q}_{_{{\mathbf{\hat N}_{_{\text{AC}}}}\!{\mathbf{\hat N}_{_\text{AC}}}}} \!\!< \!\mathbf{Q}_{_{{\mathbf{\hat N}_{_{\text{UC}}}}\!{\mathbf{\hat N}_{_\text{UC}}}}}$, the AC-AM can improve the float solution such that it can provide better ambiguity estimations compared with the UC-AM \cite{giorgi2013low}.

\subsection{The Orthonormality-Constrained Attitude Determination}
\label{sec: ocmodel}
The AC-AM does not rigorously integrate the known geometry of the GNSS antenna configuration into the objective function. To fully leverage the nonlinear constraints of antenna-array geometry, an orthonormal matrix ${\mathbf{R}}$ is used to link ${{\mathbf{X}}}$ and ${{\mathbf{X}}}_b$ through 
\begin{equation}
{{\mathbf{X}}} = {{\mathbf{R}}}{{\mathbf{X}_b}}, \quad {\mathbf{R}} \in {\mathbb{O}^{3 \times q}},
\label{eq:link1}
\end{equation}
where ${\mathbf{R}}^{\text{T}}{\mathbf{R}} = {\mathbf{I}_q}$, and ${\mathbf{I}_q}$ represents an identity matrix of size $q$. The matrix $\mathbf{R}$ indicates the orientation of the platform's body frame relative to the reference coordinate. The combination of \eref{eq:model1} and \eref{eq:link1} yields the orthonormality-constrained attitude model (OC-AM).

The OC-AM involves two kinds of constraints, namely, the integer constraints for the ambiguities and the orthonormality constraints for matrix ${\mathbf{R}}$. The optimization for the OC-AM can be expressed as \begin{equation}
\mathop {\min }\limits_{{\mathbf{R}} \in {\mathbb{O}^{3 \times q}}, {\mathbf{N}} \in {\mathbb{Z}^{\mathcal{S} \times\mathcal{A}}}}\left\| {\operatorname{vec}\left( {{\mathbf{Y}} - {\mathbf{AR}}{\mathbf{X}_b} - {\mathbf{BN}}} \right)} \right\|_{\mathbf{Q}_{\mathbf{Y\!Y}}^{-1}}^2,
\label{eq:ocam}
\end{equation}
which can be decomposed into three terms \cite{giorgi2012instantaneous, 6491499}
\begin{equation}
\resizebox{.88\hsize}{!}{
$
\begin{aligned}
&\mathop {\min }\limits_{{\mathbf{R}} \in {\mathbb{O}^{3 \times \mathcal{A}}}, {\mathbf{N}} \in {\mathbb{Z}^{\mathcal{S} \times\mathcal{A}}}} \!\left\| {\operatorname{vec} \! \left( {{\mathbf{Y}} \!-\! {\mathbf{AR}}{\mathbf{X}_b} \!-\! {\mathbf{BN}}} \right)}\! \right\|_{\mathbf{Q}_{\mathbf{Y\!Y}}^{-1}}^2
= \left\| {\operatorname{vec} \! \left( {\mathbf{E}}_{_\text{AC}} \right)}\! \right\|_{\mathbf{Q}_{\mathbf{Y\!Y}}^{-1}}^2 \\ &+ \!\mathop {\min }\limits_{{\mathbf{N}} \in {\mathbb{Z}^{\mathcal{S} \times\mathcal{A}}}} \!\! \left(\!\! \left\| {\operatorname{vec} \!\! \left( \! {\mathbf{N}} \! - \! {\mathbf{\hat N}_{_\text{AC}}} \!\right)} \!\right\|_{\mathbf{Q}_{{\mathbf{\hat N}_{_\text{AC}}\mathbf{\hat N}_{_\text{AC}}}}^{-1}}^2\!\!\! +\!
\mathop {\min }\limits_{{\mathbf{R}} \in {\mathbb{O}^{3 \times q}}}\!\!\left\| {\operatorname{vec} \!\! \left(\!{\mathbf{R}} \!-\! {\mathbf{\hat R}_{_\text{AC}}}\!\!\left(\!{\mathbf{N}}\!\right) \!\right)} \!\right\|_{\mathbf{Q}_{{\mathbf{\hat R}_{_\text{AC}}}\!\left(\!{\mathbf{N}}\!\right){\mathbf{\hat R}_{_\text{AC}}}\!\left(\!{\mathbf{N}}\!\right)}^{-1}}^2 \!\right)\!.
\label{eq:decom3}
\end{aligned}
$}
\end{equation}
Both \eref{eq:decom2} and \eref{eq:decom3} use the same float solution. The only difference between \eref{eq:decom2} and \eref{eq:decom3} is the solution space for the matrix $\mathbf{R}$. Due to the orthonormality constraint, one has to take the last term of \eref{eq:decom3} into account. For the OC-AM, the optimization in \eref{eq:ocam} is equivalent to \cite{giorgi2012instantaneous}
\begin{equation}
\resizebox{.85\hsize}{!}{
$
\begin{aligned}
\mathop {\min }\limits_{{\mathbf{N}} \in {\mathbb{Z}^{\mathcal{S} \times\mathcal{A}}}} \!\! \left(\!\! \left\| {\operatorname{vec} \! \left( \! {\mathbf{N}} \! - \! {\mathbf{\hat N}_{_\text{AC}}} \!\right)} \!\right\|_{\mathbf{Q}_{{\mathbf{\hat N}_{_\text{AC}}\!\mathbf{\hat N}_{_\text{AC}}}}^{-1}}^2\!\!\! +\!
\mathop {\min }\limits_{{\mathbf{R}} \in {\mathbb{O}^{3 \times q}}}\!\left\| {\operatorname{vec} \! \left(\!{\mathbf{R}} \!-\! {\mathbf{\hat R}_{_\text{AC}}}\!\!\left(\!{\mathbf{N}}\!\right) \!\right)} \!\right\|_{\mathbf{Q}_{{\mathbf{\hat R}_{_\text{AC}}}\!\left(\!{\mathbf{N}}\!\right){\mathbf{\hat R}_{_\text{AC}}}\!\left(\!{\mathbf{N}}\!\right)}^{-1}}^2 \!\right).
\label{eq:lambda3}
\end{aligned}
$}
\end{equation}
The OC-AM is more robust than the UC-AM and AC-AM. Nowadays, the MC-LAMBDA method is the most notable approach to solve (\ref{eq:lambda3}), which takes advantage of the search-and-shrink or search-and-expand algorithms to speed up the integer search procedures \cite{giorgi2010carrier}. However, the search space of \eref{eq:lambda3} is no longer an ellipsoidal set, which leads to more complexity than that of \eref{eq:lambda1} and \eref{eq:lambda2}. It is challenging to solve the complicated non-convex optimization \eref{eq:lambda3}, especially to satisfy the requirements of real-time applications.
Therefore, it is essential to develop a more efficient method to alleviate the high computational burden of ambiguity resolution encountered in many practical situations.

\section{Overview of Riemannian Manifold Optimization} 
\label{sec:ove}
In this work, we formulate the GNSS attitude determination problem as an optimization on Riemannian manifolds. Before studying the details of the proposed method, we first briefly introduce all of the necessary ingredients for Riemannian algorithms on matrix manifolds, including the manifold terminology, the Riemannian derivatives, and the optimization algorithm design. 

\subsection{Manifold Optimization Terminology}
A matrix manifold $\mathcal{M}$ refers to a topological space that is in bijection with an open space of the Euclidean space. At each point $\mathbf{X} \in \mathcal{M}$, the manifold $\mathcal{M}$ locally resembles a linear space, i.e., a Euclidean space, which is known as the tangent space $\mathcal{T}_{_\mathbf{X}} \mathcal{M}$. The dimension of $\mathcal{T}_{_\mathbf{X}} \mathcal{M}$ represents the dimension of $\mathcal{M}$, namely, the degrees of freedom.

The tangent space $\mathcal{T}_{_\mathbf{X}} \mathcal{M}$ can be endowed with a positive-definite inner product $\langle .,. \rangle_{\mathbf{X}}$ that is called the Riemannian metric, which makes it possible to define the notion of length for tangent vectors. Although there are various Riemannian metrics for a manifold $\mathcal{M}$, a typical option is to employ the inner product in canonical form, permitting the simple expressions of the Riemannian gradient and Hessian.

Let $f: \mathcal{M} \longrightarrow \mathbb{R}$ be a smooth function over a matrix manifold $\mathcal{M}$. The directional derivative of $f$ at the point $\mathbf{X} \in \mathcal{M}$ in the direction $\bm{\xi}_{_\mathbf{X}} \in \mathcal{T}_{_\mathbf{X}} \mathcal{M}$, expressed by $\text{D}(f(\mathbf{X}))[\bm{\xi}_{_\mathbf{X}}]$, is defined as
\begin{align}
\text{D}(f(\mathbf{X}))[\bm{\xi}_{_\mathbf{X}}] = \lim_{t \rightarrow 0} \cfrac{f(\mathbf{X}+t\bm{\xi}_{_\mathbf{X}}) - f(\mathbf{X})}{t}. 
\label{eq:direcderiv}
\end{align}
Given that the tangent space $\mathcal{T}_{_\mathbf{X}}\mathcal{M}$ is a linear approximation of the Riemannian manifold $\mathcal{M}$ near the point $\mathbf{X}$, only the tangent vectors $\bm{\xi}_{_\mathbf{X}} \in \mathcal{T}_{_\mathbf{X}}\mathcal{M}$ can be utilized as valid direction vectors. The operator $\text{D}(f(\mathbf{X})): \mathcal{T}_{_\mathbf{X}}\mathcal{M} \longrightarrow \mathbb{R}$ is defined as the indefinite directional derivative of $f$ at $\mathbf{X}$, which gives the the directional derivative $\text{D}(f(\mathbf{X}))[\bm{\xi}_{_\mathbf{X}}]$ for a specific tangent vector $\bm{\xi}_{_\mathbf{X}}$.

\subsection{Riemannian Optimization Algorithms}
To design optimization algorithms over Riemannian manifolds, one needs the notions of Riemannian gradient $\overline \nabla_{_\mathbf{X}} f$ and Riemannian Hessian $\overline \nabla_{_\mathbf{X}}^2 f$, which are derivative operators defined only on the tangent space $\mathcal{T}_{_\mathbf{X}}\mathcal{M}$, unlike Euclidean gradient $\nabla_{_\mathbf{X}} f$ and Euclidean Hessian $\nabla_{_\mathbf{X}}^2 f$ that are valid in Euclidean space and not exclusively on $\mathcal{T}_{_\mathbf{X}}\mathcal{M}$. At the point $\mathbf{X} \in \mathcal{M}$, the Euclidean gradient $\nabla_{_\mathbf{X}} f$ is related to the directional derivative $\bm{\xi}_{_\mathbf{X}} \in \mathcal{T}_{_\mathbf{X}} \mathcal{M}$ as
\begin{align}
\langle \nabla_{_\mathbf{X}} f, \bm{\xi}_{_\mathbf{X}} \rangle_{_\mathbf{X}} = \text{D}(f(\mathbf{X}))[\bm{\xi}_{_\mathbf{X}}].
\label{eq:direc}
\end{align}
When the canonical Riemannian metric is used, one can easily achieve the Riemannian gradient $\overline \nabla_{_\mathbf{X}} f$ via projecting the Euclidean gradient $\nabla_{_\mathbf{X}} f$ onto the tangent space, that is
\begin{align}
\overline \nabla_{_\mathbf{X}} f = \Pi_{_\mathbf{X}} \! \left( \nabla_{_\mathbf{X}} f\right),
\label{eq:grad}
\end{align}
where $\Pi_{\mathbf{x}}$ denotes the corresponding orthogonal projection. Similar to the definition of Euclidean Hessian, one can define the Riemannian Hessian $\overline \nabla_{_\mathbf{X}}^2 f$ as the directional derivative of the Riemannian gradient. However, the directional derivative of $\overline \nabla_{_\mathbf{X}} f$ does not have to lie in the tangent space $\mathcal{T}_{_\mathbf{X}}\mathcal{M}$ such that further operation is required to project the directional derivative of $\overline \nabla_{_\mathbf{X}} f$ onto the tangent space, i.e.,
\begin{align}
\overline \nabla_{_\mathbf{X}}^2 f[\bm{\xi}_{\mathbf{x}}] = \Pi_{\mathbf{x}}(\text{D}(\overline \nabla_{_\mathbf{X}} f)[\bm{\xi}_{\mathbf{x}}]),
\label{eq:hess}
\end{align}
in which the same orthogonal projection $\Pi_{_\mathbf{X}}$ is applied.

Rather than the nonlinear constrained optimization in Euclidean space, Riemannian optimization generally involves unconstrained optimization on Riemannian manifold, i.e., unconstrained optimization over a constrained set. Therefore, Riemannian optimization algorithms follow similar procedures as the unconstrained ones in Euclidean space, but some discrepancies need to be addressed. Initially, one needs to locally approximate the manifold around the point $\mathbf{X}$ on the given manifold to achieve a linear space. In other words, the first step is to derive the formulation of the tangent space $\mathcal{T}_{_\mathbf{X}}\mathcal{M}$ and endow it with the Riemannian metric. Then, based on the Riemannian gradient and Hessian, we can obtain a descent direction $\bm{\xi}_{_\mathbf{X}} \in \mathcal{T}_{_\mathbf{X}} \mathcal{M}$ and a step size $\alpha$ so that we can find a new point $\mathbf{X}+\alpha\bm{\xi}_{_\mathbf{X}}$ contained in the tangent space. However, the newly found point is not a feasible solution since it is not on the manifold. Finally, a retraction operator $R_{_\mathbf{X}}$ is utilized to project the new point to the manifold.

The procedures of Riemannian optimization are summarized in \algref{alg1}. The Riemannian algorithms that rely on only the gradient information refer to the first-order algorithms, where the steepest descent direction can be chosen as $\bm{\xi}_{_\mathbf{X}} = - \frac{\overline \nabla_{_\mathbf{X}} f}{||\overline \nabla_{_\mathbf{X}} f||_{_{\mathbf{X}}}}$. In contrast, those who take advantage of the Riemannian Hessian belong to the second-order algorithms, among which, one example is Newton’s method on Riemannian manifolds that figures out the search direction by solving
$\overline \nabla_{_\mathbf{X}}^2 f[\bm{\xi}_{\mathbf{x}}] = -\overline \nabla_{_\mathbf{X}} f$.

According to the discussion above, the Riemannian optimization algorithms require a few ingredients, namely, the tangent space, the orthogonal projector, the Riemannian gradient, the Riemannian Hessian, and the retraction operator. For more details about Riemannian optimization, refer to \cite{smith1994optimization, absil2009optimization,liu2020simple}. 


\begin{algorithm}[t!]
\begin{algorithmic}[1]
\STATE Initialize ${\mathbf{X}} \in \mathcal{M}$.
\WHILE {$||\overline \nabla_{_\mathbf{X}} f||_{_{\mathbf{X}}} \neq 0$}
\STATE Find the search direction $\bm{\xi}_{_\mathbf{X}} \in \mathcal{T}_{_\mathbf{X}}\mathcal{M}$ using $\overline \nabla_{_\mathbf{X}} f$ and/or $\overline \nabla_{_\mathbf{X}}^2 f$.
\STATE Compute the step size $\alpha$ using backtracking.
\STATE Retract ${\mathbf{X}} = R_{_\mathbf{X}}( \alpha \bm{\xi}_{_\mathbf{X}})$.
\ENDWHILE
\end{algorithmic}
\caption{Procedures of Riemannian Optimization.}
\label{alg1}
\end{algorithm}

\section{The Proposed Riemannian-manifold-based Attitude Determination Method}
\label{sec:proposed}
For the integer search-based methods discussed in \sref{sec: model}, ambiguity resolution is carried out via searching around the float solution in the integer domain; that is, the float solution acts as an intermediate to simplify the integer search. As a consequence, the performance of ambiguity resolution profoundly relies on the quality of the float solution. Hence, it is critical to ensure that the float solution is of sufficient accuracy. Indeed, the more accurate the float solution, the better. To facilitate resolving the carrier-phase ambiguities more efficiently and reliably, we propose an approach based on optimization techniques on Riemannian manifolds to enhance the float solution. Subsequently, we develop an efficient integer search algorithm based on the improved float solution. The proposed method employs the OC-AM with \eref{eq:ocam} being the objective function, which is named as Riemannian-manifold-based orthonormality-constrained attitude
determination (RieMOCAD).

\subsection{The Intermediate Float Solution And Riemannian Algorithm Design}
\label{sec: IntSol}
In this section, we first formulate the float solution used in this work. Then, we study and characterize the geometry of the manifolds of interest and present Riemannian optimization algorithms employed to achieve the float solution.

\subsubsection{The Float Solution}
The key principle behind the proposed method is to maintain the orthonormality constraint ${\mathbf{R}} \in {\mathbb{O}^{3 \times q}}$ when pursuing the initial float solution. The motivation here is that preserving this constraint is expected to provide a better float solution (than the one obtained with the constraint ignored). Incorporating the constrained, however, usually leads to an increased computational complexity. To develop a computationally efficient solution, we leverage advanced tools from \emph{manifold optimization} \cite{smith1994optimization, absil2009optimization,liu2020simple}. 

Due to the orthonormality characteristics, the solution set of ${\mathbf{R}}$ is a manifold, which allows us to solve the problem using efficient Riemannian algorithms. In other words, we calculate the float solution by using Riemannian manifold optimization, which gives us
\begin{equation}
\resizebox{.88\hsize}{!}{
$
{{{\mathbf{\hat R}}}_{_\text{RM}} }, {{{\mathbf{\hat N}}}_{_\text{RM}} }\! \! =\! \!\mathop {\arg \min }\limits_{{\mathbf{R}} \in {\mathbb{O}^{3 \times q}}, {\mathbf{N}} \in {\mathbb{R}^{\mathcal{S} \times\mathcal{A}}}} \! \left\| \! {\operatorname{vec}\! \left( {{\mathbf{Y}} \! -\!  {\mathbf{AR}}{\mathbf{X}_b} \! - \! {\mathbf{BN}}} \right)}\!  \right\|_{\mathbf{Q}_{\mathbf{Y\!Y}}^{-1}}^2.
$}
\label{eq:floatrm}
\end{equation}

The incorporation of the orthonormality constraint ${\mathbf{R}} \in {\mathbb{O}^{3 \times q}}$ can improve the estimation of ${\mathbf{R}}$ using \eref{eq:floatrm} compared to \eref{eq:float2}. We can generally expect ${{{\mathbf{\hat R}}}_{_\text{RM}}}$ to be more accurate than ${{{\mathbf{\hat R}}}_{_\text{AC}}}$. Given the convexity of the objective functions \eref{eq:float2} and \eqref{eq:floatrm} in the unconstrained ambiguities, the float solutions obtained from these optimizations should satisfy
\begin{equation}
{{{\mathbf{\hat N}}}_{_\text{AC}} }\! \! =\! \left(  {\mathbf{B}}^{\text{T}} {\mathbf{Q}_{\mathbf{Y\!Y}}^{-1}}{\mathbf{B}}\right){\mathbf{B}}^{\text{T}} {\mathbf{Q}_{\mathbf{Y\!Y}}^{-1}}\!\operatorname{vec} \!\!\left( \!{\mathbf{Y}} \! -\!  {\mathbf{A}}{{{\mathbf{\hat R}}}_{_\text{AC}}}{\mathbf{X}_b}\!\right),
\label{eq:floatrm1}
\end{equation}
\begin{equation}
{{{\mathbf{\hat N}}}_{_\text{RM}} }\! \! =\! \left(  {\mathbf{B}}^{\text{T}} {\mathbf{Q}_{\mathbf{Y\!Y}}^{-1}}{\mathbf{B}}\right){\mathbf{B}}^{\text{T}} {\mathbf{Q}_{\mathbf{Y\!Y}}^{-1}}\!\operatorname{vec} \!\!\left( \!{\mathbf{Y}} \! -\!  {\mathbf{A}}{{{\mathbf{\hat R}}}_{_\text{RM}}}{\mathbf{X}_b}\!\right).
\label{eq:floatrm2}
\end{equation}
Given that ${{{\mathbf{\hat R}}}_{_\text{RM}} }$ is less erroneous than ${{{\mathbf{\hat R}}}_{_\text{AC}} }$, we can expect ${{{\mathbf{\hat N}}}_{_\text{RM}} }$ to be of higher quality than ${{{\mathbf{\hat N}}}_{_\text{AC}} }$.

Note that the LS solutions ${{{\mathbf{\hat R}}}_{_\text{AC}} }$ and ${{{\mathbf{\hat N}}}_{_\text{AC}} }$ are also required in the integer search process (presented later). Hence, the first step is to compute the LS solution ${{{\mathbf{\hat R}}}_{_\text{AC}} }$ and ${{{\mathbf{\hat N}}}_{_\text{AC}} }$, which allows us to obtain ${{{\mathbf{\hat R}}}_{_\text{RM}} }$ and ${{{\mathbf{\hat N}}}_{_\text{RM}} }$ based on ${{{\mathbf{\hat R}}}_{_\text{AC}} }$ and ${{{\mathbf{\hat N}}}_{_\text{AC}} }$ rather than solving \eref{eq:floatrm} directly. To determine the solution of \eref{eq:floatrm}, we reformulate the objective function as a sum of several squares terms. This reformulation helps us in exploiting Riemannian optimization with an easy-compute form.


\begin{lemma} \label{le:decomfloat}
Let ${{{\mathbf{\hat R}}}_{_\text{AC}} }$ and ${{{\mathbf{\hat N}}}_{_\text{AC}} }$ be the LS solutions given in \eref{eq:float2}. Then, we have
\begin{equation}
\resizebox{.88\hsize}{!}{
$
\begin{aligned}
&\left\| {\operatorname{vec}\!\left( \!{{\mathbf{Y}}\! -\!{\mathbf{AR}}{\mathbf{X}_b} \!- \!{\mathbf{BN}}}\! \right)}\! \right\|_{\mathbf{Q}_{\mathbf{Y\!Y}}^{-1}}^2 = \left\| {\operatorname{vec} \! \left( {\mathbf{E}_{_\text{AC}}} \right)}\! \right\|_{\mathbf{Q}_{\mathbf{Y\!Y}}^{-1}}^2\\
& +\! \left\| \mathrm{vec} \!\left ( \mathbf{R} \!-\!\mathbf{\hat R}_{_\text{AC}}\!\right )\! \right\|_{\mathbf{Q}_{{\mathbf{\hat R}_{_\text{AC}}}\!{\mathbf{\hat R}_{_\text{AC}}}}^{-1}}^2 \!+\! \left\| \mathrm{vec} \!\left ( \mathbf{N} \!-\!\mathbf{\hat N}_{_\text{AC}}\!\!\left (\mathbf{R} \right )\!\right )\! \right\|_{\mathbf{Q}_{{\mathbf{\hat N}_{_\text{AC}}}\!\left({\mathbf{R}}\right){\mathbf{\hat N}_{_\text{AC}}}\!\left({\mathbf{R}}\right)}^{-1}}^2,
\end{aligned}
$}
\label{eq:lemf}
\end{equation}
with
\begin{equation}
\resizebox{.88\hsize}{!}{
$
\begin{aligned}
&\operatorname{vec} \!\!  \left(\!\!{\mathbf{\hat N}_{_\text{AC}}}\!\!\left(\!{\mathbf{R}}\!\right)\!\! \right) \!= \!\operatorname{vec}\!\left( \!{\mathbf{\hat N}_{_\text{AC}}}\! \right) \!-\! \mathbf{Q}_{_{{\mathbf{\hat N}_{_\text{AC}}}\!{\mathbf{\hat R}_{_\text{AC}}}}}\!\! \mathbf{Q}_{_{{\mathbf{\hat R}_{_{\text{AC}}}}\!{\mathbf{\hat R}_{_\text{AC}}}}}^{-1}\!\!\operatorname{vec} \!\!  \left(\!\!\mathbf{\hat R}_{_{\text{AC}}}\! \!- \!\mathbf{R}\!\!\right)\!.
\end{aligned}
$}
\label{eq:rm1cond1}
\end{equation}
\end{lemma}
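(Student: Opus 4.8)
The plan is to recognize \eref{eq:lemf} as an instance of the standard block-least-squares (parameter-separation) identity applied to the linear model in $(\mathbf{R},\mathbf{N})$, and to verify that the quadratic form splits into the three claimed pieces. First I would treat $\mathbf{R}$ and $\mathbf{N}$ as \emph{unconstrained} Euclidean variables (the orthonormality set plays no role in this algebraic identity; it only matters later when the first term is dropped and the remaining pieces are minimized over $\mathbb{O}^{3\times q}$). Writing $f(\mathbf{R},\mathbf{N}) = \left\| \operatorname{vec}(\mathbf{Y} - \mathbf{AR}\mathbf{X}_b - \mathbf{BN}) \right\|_{\mathbf{Q}_{\mathbf{Y\!Y}}^{-1}}^2$, this is a jointly quadratic (indeed convex, with positive-semidefinite curvature) function of the stacked vector $\begin{bmatrix}\operatorname{vec}(\mathbf{R})\\ \operatorname{vec}(\mathbf{N})\end{bmatrix}$, once one uses $\operatorname{vec}(\mathbf{AR}\mathbf{X}_b) = (\mathbf{X}_b^{\text{T}}\otimes\mathbf{A})\operatorname{vec}(\mathbf{R})$ and $\operatorname{vec}(\mathbf{BN}) = (\mathbf{I}\otimes\mathbf{B})\operatorname{vec}(\mathbf{N})$. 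Its global minimizer is exactly $({\mathbf{\hat R}}_{_\text{AC}},{\mathbf{\hat N}}_{_\text{AC}})$ by definition \eref{eq:float2}, and the minimum value is $\left\|\operatorname{vec}(\mathbf{E}_{_\text{AC}})\right\|_{\mathbf{Q}_{\mathbf{Y\!Y}}^{-1}}^2$.

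Next I would invoke the exact second-order Taylor expansion of a quadratic about its minimizer: for any quadratic $f$ with Hessian $\mathbf{G}\succeq 0$ and minimizer $\mathbf{z}^\star$, one has $f(\mathbf{z}) = f(\mathbf{z}^\star) + (\mathbf{z}-\mathbf{z}^\star)^{\text{T}}\mathbf{G}(\mathbf{z}-\mathbf{z}^\star)$ (no linear term, since the gradient vanishes at $\mathbf{z}^\star$). Here $\mathbf{G}$ is the normal-equations matrix of the stacked least-squares problem, whose inverse is precisely the joint covariance $\begin{bmatrix}\mathbf{Q}_{{\mathbf{\hat R}}_{_\text{AC}}{\mathbf{\hat R}}_{_\text{AC}}} & \mathbf{Q}_{{\mathbf{\hat R}}_{_\text{AC}}{\mathbf{\hat N}}_{_\text{AC}}}\\ \mathbf{Q}_{{\mathbf{\hat N}}_{_\text{AC}}{\mathbf{\hat R}}_{_\text{AC}}} & \mathbf{Q}_{{\mathbf{\hat N}}_{_\text{AC}}{\mathbf{\hat N}}_{_\text{AC}}}\end{bmatrix}$ (this is the standard identification of the Gauss--Markov covariance with the inverse information matrix). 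Thus
$f(\mathbf{R},\mathbf{N}) - \left\|\operatorname{vec}(\mathbf{E}_{_\text{AC}})\right\|_{\mathbf{Q}_{\mathbf{Y\!Y}}^{-1}}^2$
equals the squared norm of $\big(\operatorname{vec}(\mathbf{R}-{\mathbf{\hat R}}_{_\text{AC}}),\operatorname{vec}(\mathbf{N}-{\mathbf{\hat N}}_{_\text{AC}})\big)$ in the metric given by the inverse of that joint covariance.

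The remaining step is purely the block-triangular (Schur-complement) factorization of that joint-covariance-inverse quadratic form. Completing the square in the $\mathbf{N}$-block conditioned on $\mathbf{R}$ gives the two terms $\left\|\operatorname{vec}(\mathbf{R}-{\mathbf{\hat R}}_{_\text{AC}})\right\|^2_{\mathbf{Q}_{{\mathbf{\hat R}}_{_\text{AC}}{\mathbf{\hat R}}_{_\text{AC}}}^{-1}}$ and $\left\|\operatorname{vec}(\mathbf{N}-{\mathbf{\hat N}}_{_\text{AC}}(\mathbf{R}))\right\|^2_{\mathbf{Q}_{{\mathbf{\hat N}}_{_\text{AC}}(\mathbf{R}){\mathbf{\hat N}}_{_\text{AC}}(\mathbf{R})}^{-1}}$, where $\operatorname{vec}({\mathbf{\hat N}}_{_\text{AC}}(\mathbf{R}))$ is the conditional mean as in \eref{eq:rm1cond1} and $\mathbf{Q}_{{\mathbf{\hat N}}_{_\text{AC}}(\mathbf{R}){\mathbf{\hat N}}_{_\text{AC}}(\mathbf{R})}$ is the corresponding Schur complement; this is exactly the same manipulation already used to pass from \eref{eq:acam} to \eref{eq:decom2}, only with the roles of $\mathbf{R}$ and $\mathbf{N}$ interchanged (there $\mathbf{N}$ was the outer variable and $\mathbf{R}$ the conditioned one, here it is the reverse). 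One should double-check the sign convention in \eref{eq:rm1cond1}: because $\operatorname{vec}({\mathbf{\hat R}}_{_\text{AC}} - \mathbf{R}) = -\operatorname{vec}(\mathbf{R}-{\mathbf{\hat R}}_{_\text{AC}})$, the formula is the usual conditional-mean expression $\mathbb{E}[\mathbf{N}\mid\mathbf{R}] = {\mathbf{\hat N}}_{_\text{AC}} + \mathbf{Q}_{{\mathbf{\hat N}}_{_\text{AC}}{\mathbf{\hat R}}_{_\text{AC}}}\mathbf{Q}_{{\mathbf{\hat R}}_{_\text{AC}}{\mathbf{\hat R}}_{_\text{AC}}}^{-1}\operatorname{vec}(\mathbf{R}-{\mathbf{\hat R}}_{_\text{AC}})$ up to this sign.

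The main obstacle is bookkeeping rather than conceptual: one must be careful that the Kronecker-product vectorization is carried out consistently so that the Euclidean Hessian of $f$ in the stacked variable is genuinely the block information matrix whose inverse is the stated joint covariance, and that the Schur complement is taken in the correct block. No step requires more than the convexity/quadratic structure already invoked in the paper for \eref{eq:decom1}--\eref{eq:decom3}; indeed \eref{eq:lemf} can be viewed as \eref{eq:decom2} read "in the other order," so the cleanest write-up simply cites that derivation and notes the swap of $\mathbf{R}\leftrightarrow\mathbf{N}$.
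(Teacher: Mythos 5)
Your proposal is correct and follows essentially the same route as the paper's Appendix~A: an exact quadratic Taylor expansion of the unconstrained objective about the joint LS minimizer, followed by a block-triangular (Schur-complement) factorization of the normal-equations matrix that conditions $\mathbf{N}$ on $\mathbf{R}$ --- precisely the role-swapped version of the decomposition used for \eref{eq:decom2}. Your sign check on \eref{eq:rm1cond1} is also consistent with the paper's convention.
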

\begin{proof} 
See Appendix A.
\end{proof}

According to \lref{le:decomfloat}, the optimization problem in \eref{eq:floatrm} is equivalent to
\begin{equation}
{{{\mathbf{\hat R}}}_{_\text{RM}} } \!=\mathop {\arg \min }\limits_{{\mathbf{R}} \in {\mathbb{O}^{3 \times q}}} \! \left\| \mathrm{vec} \!\!\left ( \mathbf{R} \!-\!\mathbf{\hat R}_{_\text{AC}}\!\right )\! \right\|_{\mathbf{Q}_{{\mathbf{\hat R}_{_\text{AC}}}\!{\mathbf{\hat R}_{_\text{AC}}}}^{-1}}^2,
\label{eq:rm1}
\end{equation}
\begin{equation}
\begin{aligned}
{{{\mathbf{\hat N}}}_{_\text{RM}} }\! \! =\! \! {\mathbf{\hat N}_{_\text{AC}}}\!\!\left(\!{\mathbf{\hat R}_{_\text{RM}}}\!\right).
\end{aligned}
\label{eq:rm1cond}
\end{equation}
That is to say, solving \eref{eq:rm1} is where Riemannian optimization comes into play.

\subsubsection{Optimization Algorithm Design on Riemannian Manifolds}
\label{sec:RieOpt}

As stated in \sref{sec: ocmodel}, the attitude matrix $\mathbf{R}$ is an orthonormal matrix whose columns are orthogonal regarding the inner product. Here, for the sake of simplicity and avoiding confusion, we still utilize the symbol $\mathbf{X}$ to denote the point on the manifold rather than symbol $\mathbf{R}$. Then, we can use the terminologies discussed in \sref{sec:ove} and present the required elements for Riemannian algorithms.

The potential solution set $\mathcal{M}$ it an embedded submanifold of $\mathbb{R}^{3 \times q}$ ($q = 1, 2, 3$), which is well known as the Stiefel manifold given by \cite{boumal2020introduction}
\begin{align}
\mathcal{M} = \left\{\mathbf{X} \in \mathbb{R}^{3 \times q} \ | \ \mathbf{X}^{\text{T}}\mathbf{X}=\mathbf{I}_q\right\}.
\end{align}
The dimension of the the Stiefel manifold $\mathcal{M}$ is 
\begin{align}
{\operatorname{dim}} \mathcal{M} \!=\! {\operatorname{dim}} \mathbb{R}^{3 \times q} \!-\! {\operatorname{dim}} {\operatorname{Sym}}\!\left(q\right) \!=\! 3q\! -\! \frac{q\left(q+1\right)}{2},
\end{align}
where ${\operatorname{Sym}}\!\left(q\right)$ represents the linear space of symmetric matrices of size $q$.

The tangent spaces of the the Stiefel manifold $\mathcal{M}$, a subspace of $\mathbb{R}^{3 \times q}$, is given by
\begin{align}
\mathcal{T}_{_\mathbf{X}}\mathcal{M} = \left\{\mathbf{V} \in \mathbb{R}^{3 \times q} \!\ | \ \!\mathbf{X}^{\text{T}}\mathbf{V} + \mathbf{V}^{\text{T}}\mathbf{X}=\mathbf{0}\right\}.
\end{align}
Alternatively, one can also express the tangent vectors in an explicit form as
\begin{align}
\mathcal{T}_{_\mathbf{X}}\mathcal{M} \!= \!\left\{\!\mathbf{X} \mathbf{S}\! +\! \mathbf{X}_{\perp} \mathbf{K} \!\! \ | \ \!\!\mathbf{S} \!\in\! {\operatorname{Skew}}\!\left(q\right)\!, \mathbf{K} \!\in\! \mathbb{R}^{(3-q) \times q} \!\right\}\!,
\end{align}
where $\mathbf{X}$ and $\mathbf{X}_{\perp}$ constitute the orthonormal basis of $\mathbb{R}^{3 \times 3}$, and 
\begin{align}
{\operatorname{Skew}}\!\left(q\right) = \left\{\mathbf{S} \in \mathbb{R}^{q \times q} \ | \ \mathbf{S}^{\text{T}} = -\mathbf{S}\right\}
\end{align}
denotes the set of skew-symmetric matrices of size $q$.

At a point $\mathbf{X}$ on the Stiefel manifold $\mathcal{M}$, the orthogonal projection to the tangent space $\mathcal{T}_{_\mathbf{X}}\mathcal{M}$ can be formulated as
\begin{equation}
\begin{aligned}
\Pi_{_\mathbf{X}} \! \left( \mathbf{U}\right) = &\mathbf{U} - \mathbf{X}\frac{\mathbf{X}^{\text{T}}\mathbf{U}+\mathbf{U}^{\text{T}}\mathbf{X}}{2}\\
=&\left( \mathbf{I} - \mathbf{X}\mathbf{X}^{\text{T}}\right)\!\mathbf{U} +  \mathbf{X}\frac{\mathbf{X}^{\text{T}}\mathbf{U}-\mathbf{U}^{\text{T}}\mathbf{X}}{2}.
\end{aligned}
\end{equation}
Note that the orthogonal projector $\Pi_{_\mathbf{X}}$ guarantees that $\mathbf{U} - \Pi_{_\mathbf{X}} \! \left( \mathbf{U}\right)$ falls into the normal space $\mathcal{T}^{\perp}_{_\mathbf{X}}\mathcal{M}$ of the tangent space $\mathcal{T}_{_\mathbf{X}}\mathcal{M}$. According to \eref{eq:direc}, \eref{eq:grad} and \eref{eq:hess}, we can compute the Riemannian gradient $\overline \nabla_{_\mathbf{X}} f$ and Riemannian Hessian $\overline \nabla_{_\mathbf{X}}^2 f$. Then, we obtain
\begin{align}
\overline \nabla_{_\mathbf{X}} f = \nabla_{_\mathbf{X}} f -  \mathbf{X}\frac{\mathbf{X}^{\text{T}}{\nabla_{_\mathbf{X}} f}+\left( \nabla_{_\mathbf{X}} f\right)^{\text{T}}\mathbf{X}}{2},
\end{align}
\begin{align}
\overline \nabla_{_\mathbf{X}}^2 f[\bm{\xi}_{\mathbf{x}}]\! =\! \Pi_{\mathbf{x}}(\nabla_{_\mathbf{X}}^2 f[\bm{\xi}_{\mathbf{x}}])\!-\! \bm{\xi}_{\mathbf{x}}\!\frac{\mathbf{X}^{\text{T}}{\nabla_{_\mathbf{X}} f}\!+\!\left( \nabla_{_\mathbf{X}} f\right)^{\text{T}}\!\mathbf{X}}{2}.
\end{align}

After finding a new point in $\mathcal{T}_{_\mathbf{X}}\mathcal{M}$ using $\overline \nabla_{_\mathbf{X}} f$ and/or $\overline \nabla_{_\mathbf{X}}^2 f$, retraction is required to turn the new point into the manifold $\mathcal{M}$. For the the Stiefel manifold, multiple retraction operators exist. For instance, the Q-factor retraction is an efficient operator to retract the updated point to the manifold \cite{boumal2020introduction}:
\begin{align}
R_{_\mathbf{X}}({\mathbf{V}}) = \mathbf{Q},
\end{align}
in which the QR decomposition is employed such that $\mathbf{QR}_{ut}=\mathbf{X}+\mathbf{V}$ with $\mathbf{Q} \in \mathcal{M}$ and $\mathbf{R}_{ut} \in \mathbb{R}^{q \times q}$. Here, $\mathbf{R}_{ut}$ represents an upper triangular with nonnegative diagonal entries. On the other hand, the polar retraction can also be used \cite{boumal2020introduction}, which is expressed as
\begin{equation}
\begin{aligned}
R_{_\mathbf{X}}({\mathbf{V}}) = &\left( \mathbf{X}+\mathbf{V}\right)\left( \left( \mathbf{X}+\mathbf{V}\right)^\text{T}\left( \mathbf{X}+\mathbf{V}\right)\right)^{-\frac{1}{2}}\\
=&\left( \mathbf{X}+\mathbf{V}\right)\left(\mathbf{I}_q +\mathbf{V}^\text{T}\mathbf{V}\right)^{-\frac{1}{2}},
\end{aligned}
\end{equation}
where $\left( \cdot \right)^{-\frac{1}{2}}$ represents the inverse matrix square root. For both retraction operators, it readily seen that $R_{_\mathbf{X}}({\mathbf{0}}) = \mathbf{X}$. In this work, we employ the polar retraction.

\subsection{Ambiguity Resolution Aided by the Float Solution}
\label{sec:ambres}
\subsubsection{Decomposition of the Objective Function}
\label{sec:decompose}
As stated in \sref{sec: model}, decomposing the objective function to a few more simple terms plays a crucial part in ambiguity resolution in the sense of defining the search space and simplifying the search process. However, the orthogonal decomposition \eref{eq:decom1} and \eref{eq:decom2} rely on an essential fact that the first-order and high-order (higher than two) derivatives of the objective function at the LS solution are zero \cite{6164274}. There is no doubt that the orthogonal decomposition is not feasible at any other point except for the point where the gradient vanishes corresponding to the LS solution, prompting us to develop a decomposition method to use the float solution ${{{\mathbf{\hat R}}}_{_\text{RM}} }$ and ${{{\mathbf{\hat N}}}_{_\text{RM}} }$. 


In order to utilize the high-quality float solution ${{{\mathbf{\hat R}}}_{_\text{RM}}}$ and ${{{\mathbf{\hat N}}}_{_\text{RM}}}$, we first consider decomposing \eref{eq:ocam} at an arbitrary point, as shown in \lref{th:decom}.
\begin{lemma} \label{th:decom}
For $\forall {\mathbf{\bar R}} \in {\mathbb{R}^{3 \times q}}$ and $\forall \mathbf{\bar N} \in {\mathbb{R}^{\mathcal{S} \times \mathcal{A}}}$, we can rewrite the objective function of \eref{eq:ocam} as a sum of five terms expressed by ${{{\mathbf{\bar R}}}}$ and ${{\mathbf{\bar N}}}$ as
\begin{equation}
\resizebox{.88\hsize}{!}{
$
\begin{aligned}
&\left\| {\operatorname{vec}\!\left( \!{{\mathbf{Y}}\! -\!{\mathbf{AR}}{\mathbf{X}_b} \!- \!{\mathbf{BN}}}\! \right)}\! \right\|_{\mathbf{Q}_{\mathbf{Y\!Y}}^{-1}}^2 = \left\| {\operatorname{vec} \! \left( {\mathbf{\bar E}} \right)}\! \right\|_{\mathbf{Q}_{\mathbf{Y\!Y}}^{-1}}^2\\
& +\! \left\|\! \mathrm{vec} \!\left ( \!\mathbf{N} \!-\!\mathbf{\bar N}\!\right )\! \right\|_{\mathbf{Q}_{{\mathbf{\hat N}_{_\text{AC}}\!\mathbf{\hat N}_{_\text{AC}}}}^{-1}}^2\!\! \!+\! \left\| \mathrm{vec} \!\left ( \mathbf{R} \!-\!\mathbf{\bar R}\!\left (\!\mathbf{N}\! \right )\!\right )\! \right\|_{\mathbf{Q}_{{\mathbf{\hat R}_{_\text{AC}}}\!\left({\mathbf{N}}\right){\mathbf{\hat R}_{_\text{AC}}}\!\left({\mathbf{N}}\right)}^{-1}}^2\\
&+ 2\mathrm{vec}\! \left ( \mathbf{N} \!-\!\mathbf{\bar N}\right )^{\text{T}}{\mathbf{Q}_{{\mathbf{\hat N}_{_\text{AC}}\!\mathbf{\hat N}_{_\text{AC}}}}^{-1}}\!\mathrm{vec}\! \left ( \!\mathbf{\bar N} \!-\!\mathbf{\hat N}_{_\text{AC}}\!\right ) \\
&+ 2\mathrm{vec} \!\left ( \!\mathbf{R} \!-\!\mathbf{\bar R}\!\left (\!\mathbf{N} \!\right )\!\right )^{\text{T}}\!{\mathbf{Q}_{{\mathbf{\hat R}_{_\text{AC}}}\!\left({\mathbf{N}}\right){\mathbf{\hat R}_{_\text{AC}}}\!\left({\mathbf{N}}\right)}^{-1}}\!\mathrm{vec} \!\left ( \!\mathbf{\bar R} \!-\!\mathbf{\hat R}_{_\text{AC}}\!\!\left (\!\mathbf{\bar N} \!\right )\!\right )
\end{aligned}
$}
\label{eq:th1}
\end{equation}
with ${\mathbf{\bar E}} = {{\mathbf{Y}} \!-\! {\mathbf{A}}{\mathbf{\bar R}}{\mathbf{X}_b}  \!-\! {\mathbf{B \bar N}}}$.
\end{lemma}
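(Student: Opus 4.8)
The plan is to obtain the five-term expansion \eref{eq:th1} from the ordinary orthogonal decomposition that is already available around the least-squares solution ${{{\mathbf{\hat R}}}_{_\text{AC}} }, {{{\mathbf{\hat N}}}_{_\text{AC}} }$, by re-centering each of its two quadratic pieces at the arbitrary pair $(\mathbf{\bar R}, \mathbf{\bar N})$ and collecting the constants that drop out. Concretely, I would start from the pointwise identity behind \eref{eq:decom2} and \eref{eq:decom3} (valid for every $\mathbf{R}$ and $\mathbf{N}$, before any minimization is taken),
\begin{equation*}
\left\| \operatorname{vec}( \mathbf{Y} - \mathbf{AR}\mathbf{X}_b - \mathbf{BN} ) \right\|_{\mathbf{Q}_{\mathbf{Y\!Y}}^{-1}}^2 = \left\| \operatorname{vec}( \mathbf{E}_{_\text{AC}} ) \right\|_{\mathbf{Q}_{\mathbf{Y\!Y}}^{-1}}^2 + \left\| \operatorname{vec}( \mathbf{N} - \mathbf{\hat N}_{_\text{AC}} ) \right\|_{\mathbf{Q}_{{\mathbf{\hat N}_{_\text{AC}}\mathbf{\hat N}_{_\text{AC}}}}^{-1}}^2 + \left\| \operatorname{vec}( \mathbf{R} - \mathbf{\hat R}_{_\text{AC}}(\mathbf{N}) ) \right\|_{\mathbf{Q}_{{\mathbf{\hat R}_{_\text{AC}}}(\mathbf{N}){\mathbf{\hat R}_{_\text{AC}}}(\mathbf{N})}^{-1}}^2 ,
\end{equation*}
and I would note that the conditional weight $\mathbf{Q}_{{\mathbf{\hat R}_{_\text{AC}}}(\mathbf{N}){\mathbf{\hat R}_{_\text{AC}}}(\mathbf{N})}$ is in fact a fixed Schur-complement matrix that does not depend on the realized value of $\mathbf{N}$.

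Next I would apply the weighted polarization identity $\|a-c\|_{\mathbf{M}}^2 = \|a-b\|_{\mathbf{M}}^2 + \|b-c\|_{\mathbf{M}}^2 + 2(a-b)^{\text{T}}\mathbf{M}(b-c)$ twice: to the $\mathbf{N}$-quadratic with $b = \operatorname{vec}(\mathbf{\bar N})$, and to the $\mathbf{R}$-quadratic with $b = \operatorname{vec}(\mathbf{\bar R}(\mathbf{N}))$, where $\mathbf{\bar R}(\mathbf{N})$ is defined in exact analogy with $\mathbf{\hat R}_{_\text{AC}}(\mathbf{N})$, namely $\operatorname{vec}(\mathbf{\bar R}(\mathbf{N})) = \operatorname{vec}(\mathbf{\bar R}) - \mathbf{Q}_{{\mathbf{\hat R}_{_\text{AC}}}{\mathbf{\hat N}_{_\text{AC}}}} \mathbf{Q}_{{\mathbf{\hat N}_{_\text{AC}}}{\mathbf{\hat N}_{_\text{AC}}}}^{-1} \operatorname{vec}(\mathbf{\bar N} - \mathbf{N})$. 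The two ``$\|a-b\|^2$'' pieces are exactly the second and third terms of \eref{eq:th1}, and the ``$2(a-b)^{\text{T}}\mathbf{M}(b-c)$'' piece coming from the $\mathbf{N}$-quadratic is exactly the fourth term. For the $\mathbf{R}$-quadratic I would use the observation that the conditional-mean map is affine in its argument with a fixed slope $-\mathbf{Q}_{{\mathbf{\hat R}_{_\text{AC}}}{\mathbf{\hat N}_{_\text{AC}}}} \mathbf{Q}_{{\mathbf{\hat N}_{_\text{AC}}}{\mathbf{\hat N}_{_\text{AC}}}}^{-1}$, so that $\mathbf{\bar R}(\mathbf{N}) - \mathbf{\hat R}_{_\text{AC}}(\mathbf{N}) = \mathbf{\bar R} - \mathbf{\hat R}_{_\text{AC}}(\mathbf{\bar N})$ independently of $\mathbf{N}$; this identifies the $\mathbf{R}$-cross term with the fifth term of \eref{eq:th1} and turns the associated ``$\|b-c\|^2$'' leftover into $\|\operatorname{vec}(\mathbf{\bar R} - \mathbf{\hat R}_{_\text{AC}}(\mathbf{\bar N}))\|^2$.

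What remains is to collapse the two constants $\|\operatorname{vec}(\mathbf{\bar N} - \mathbf{\hat N}_{_\text{AC}})\|_{\mathbf{Q}_{{\mathbf{\hat N}_{_\text{AC}}\mathbf{\hat N}_{_\text{AC}}}}^{-1}}^2 + \|\operatorname{vec}(\mathbf{\bar R} - \mathbf{\hat R}_{_\text{AC}}(\mathbf{\bar N}))\|^2$ onto a single residual term: evaluating the pointwise decomposition above at $(\mathbf{R},\mathbf{N}) = (\mathbf{\bar R},\mathbf{\bar N})$ and using $\mathbf{\bar E} = \mathbf{Y} - \mathbf{A}\mathbf{\bar R}\mathbf{X}_b - \mathbf{B}\mathbf{\bar N}$, that sum equals $\|\operatorname{vec}(\mathbf{\bar E})\|_{\mathbf{Q}_{\mathbf{Y\!Y}}^{-1}}^2 - \|\operatorname{vec}(\mathbf{E}_{_\text{AC}})\|_{\mathbf{Q}_{\mathbf{Y\!Y}}^{-1}}^2$; adding back the residual $\|\operatorname{vec}(\mathbf{E}_{_\text{AC}})\|^2$ already present yields the first term $\|\operatorname{vec}(\mathbf{\bar E})\|^2$ of \eref{eq:th1}, finishing the proof. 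I expect the only delicate point to be the bookkeeping in the previous paragraph: confirming that the $\mathbf{R}$-polarization uses an $\mathbf{N}$-independent weight and that $\mathbf{\bar R}(\mathbf{N})$ is re-centered so that $\mathbf{\bar R}(\mathbf{N}) - \mathbf{\hat R}_{_\text{AC}}(\mathbf{N})$ collapses to the $\mathbf{N}$-free quantity appearing in \eref{eq:th1}; the rest is two mechanical applications of the polarization identity plus one evaluation of an already-established decomposition at the base point.
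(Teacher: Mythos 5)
Your proof is correct and is essentially the paper's own argument reorganized: the paper re-centers the exact quadratic $\mathcal{F}(\bm{\theta})=\mathcal{F}(\bm{\hat\theta})+(\bm{\theta}-\bm{\hat\theta})^{\text{T}}\mathbf{M}(\bm{\theta}-\bm{\hat\theta})$ at the arbitrary point $\bm{\bar\theta}$ and then block-diagonalizes with the transformation $\mathbf{T}_2$, which delivers precisely your two polarization identities, the $\mathbf{N}$-independence of the conditional weight, and the collapse $\mathbf{\bar R}(\mathbf{N})-\mathbf{\hat R}_{_\text{AC}}(\mathbf{N})=\mathbf{\bar R}-\mathbf{\hat R}_{_\text{AC}}(\mathbf{\bar N})$ that you flag as the delicate bookkeeping step. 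Your final identification of the leftover constants with $\|\operatorname{vec}(\mathbf{\bar E})\|_{\mathbf{Q}_{\mathbf{Y\!Y}}^{-1}}^{2}-\|\operatorname{vec}(\mathbf{E}_{_\text{AC}})\|_{\mathbf{Q}_{\mathbf{Y\!Y}}^{-1}}^{2}$ is the same step the paper performs when it absorbs $\mathcal{F}(\bm{\hat\theta})+(\bm{\bar\theta}-\bm{\hat\theta})^{\text{T}}\mathbf{M}(\bm{\bar\theta}-\bm{\hat\theta})$ into $\mathcal{F}(\bm{\bar\theta})$.
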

\begin{proof} 
See Appendix B.
\end{proof}

Compared with \eref{eq:decom3}, \eref{eq:th1} contains five terms on the right-hand side, among which the first represents the squares norm of the residual corresponding to ${{{\mathbf{\bar R}}}}$ and ${{\mathbf{\bar N}}}$, the second measures the the distance from the integer matrix ${{\mathbf{N}}}$ to ${{\mathbf{\bar N}}}$ in the metric of ${\mathbf{Q}_{{\mathbf{\hat N}_{_\text{AC}}\!\mathbf{\hat N}_{_\text{AC}}}}^{-1}}$, the third weighs the distance from the orthonormal matrix ${{{\mathbf{R}}}}$ to $\mathbf{\bar R}\!\left (\!\mathbf{N}\! \right )$ in the metric of ${\mathbf{Q}_{{\mathbf{\hat R}_{_\text{AC}}}\!\left({\mathbf{N}}\right){\mathbf{\hat R}_{_\text{AC}}}\!\left({\mathbf{N}}\right)}^{-1}}$, and the last two terms are related to distance from the selected points ${{{\mathbf{\bar R}}}}$ and ${{\mathbf{\bar N}}}$ to the LS solution $\mathbf{\hat R}_{_\text{AC}}$ and $\mathbf{\hat N}_{_\text{AC}}$. Note that the covariance matrixes of the LS solution are used in \eref{eq:th1}, and we can always compute these matrixes using a close form. If ${\mathbf{\bar R}} =\mathbf{\hat R}_{_\text{AC}}$ and $\mathbf{\bar N} =\mathbf{\hat N}_{_\text{AC}}$, the last two terms in \eref{eq:th1} equal to zero, leading to the identical expression as \eref{eq:decom3}.

\fref{fig:decomp} shows a geometric illustration of the decomposition of the objective function. The essence of the optimization \eref{eq:ocam} is to seek for the solution that minimizes $d^2$ in \fref{fig:decomp}. Decomposing $d^2$ into a few easy-evaluated terms allows the full usage of the float solutions as an intermediate to simplify the search process. This figure depicts one of the most simple cases and indicates the difference between the two decomposition methods. 



To simplify ambiguity resolution, we try to reformulate \eref{eq:th1} as an alternative form that is easy to be bounded and evaluate.

\begin{figure*}[htbp]
\centering
\subfigure[Orthogonal decomposition \cite{teunissen2012gps}] { \label{fig:decomp1a} 
\includegraphics[width=0.98\columnwidth]{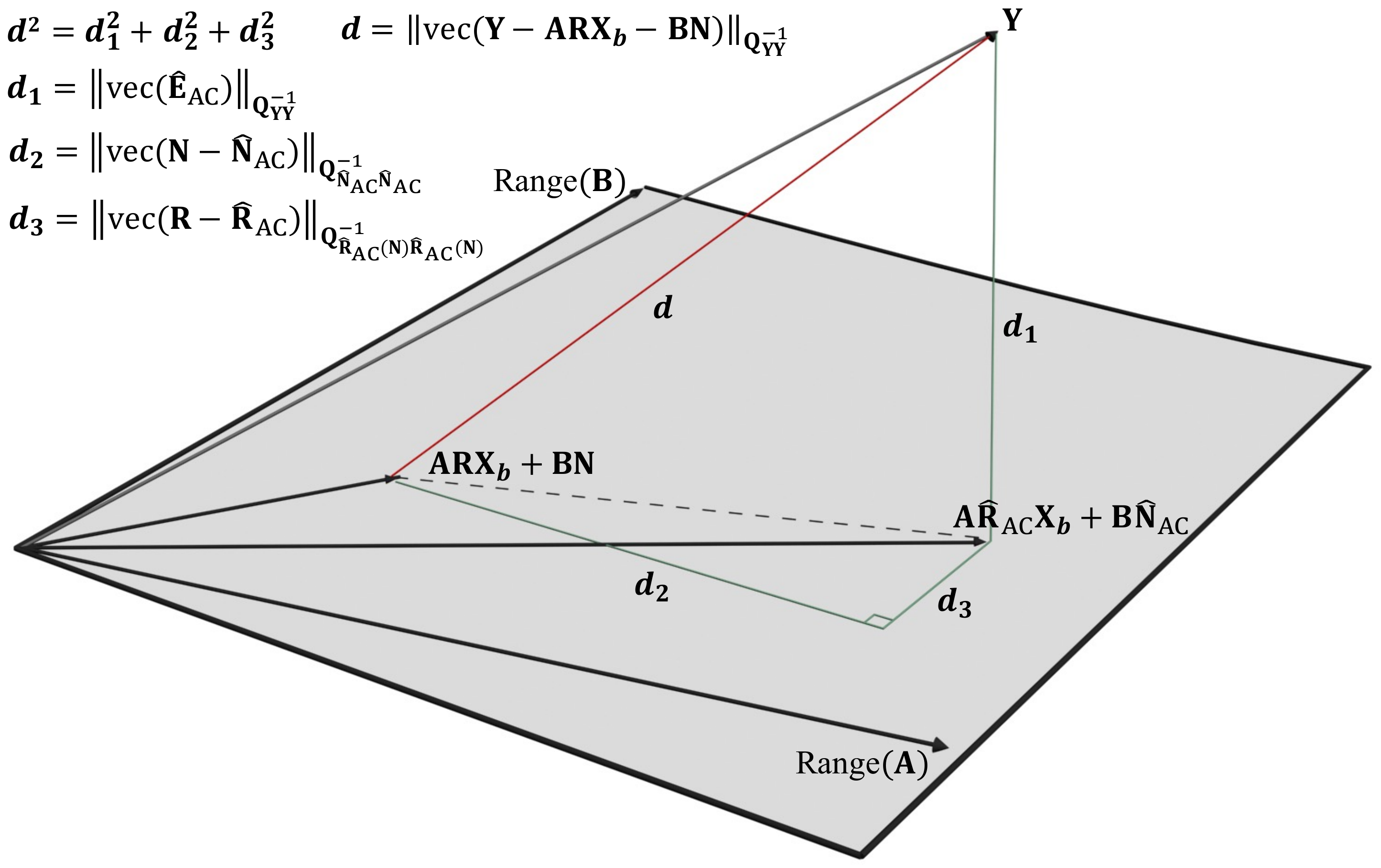}
} 
\subfigure[Proposed decomposition] { \label{fig:decomp1b} 
\includegraphics[width=0.98\columnwidth]{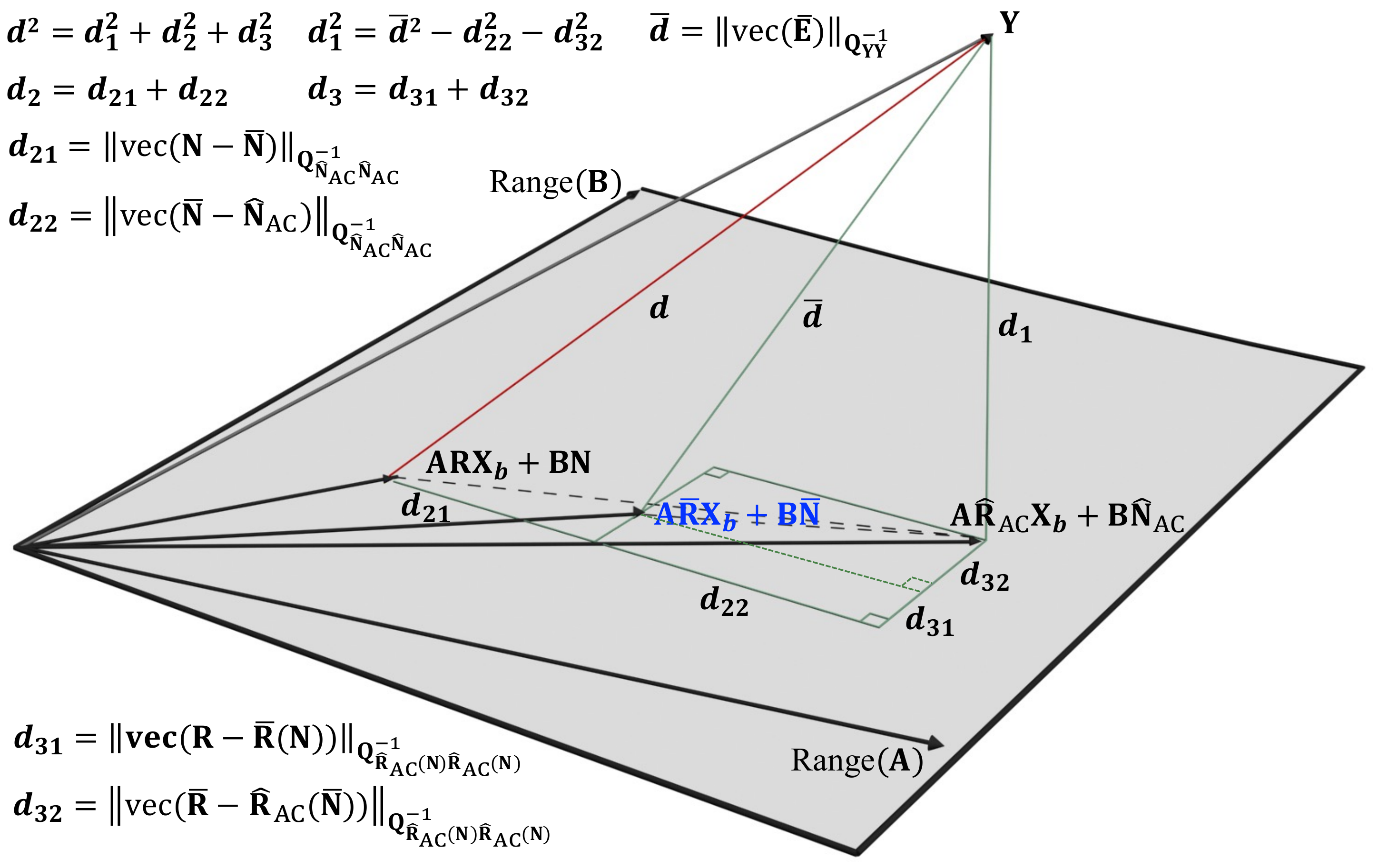} 
} 
\caption{Geometric illustration of the decomposition of the objective function.}
\label{fig:decomp}
\end{figure*}

\begin{lemma} \label{le:decom}
For $\forall {\mathbf{\bar R}} \in {\mathbb{R}^{3 \times q}}$ and $\forall \mathbf{\bar N} \in {\mathbb{R}^{\mathcal{S} \times \mathcal{A}}}$, the following relationship is an alternative form of \eref{eq:th1}
\begin{equation}
\resizebox{.88\hsize}{!}{
$
\begin{aligned}
&\left\| {\operatorname{vec}\!\left( \!{{\mathbf{Y}}\! -\!{\mathbf{AR}}{\mathbf{X}_b} \!- \!{\mathbf{BN}}}\! \right)}\! \right\|_{\mathbf{Q}_{\mathbf{Y\!Y}}^{-1}}^2 = \left\| {\operatorname{vec} \! \left( {\mathbf{\bar E}} \right)}\! \right\|_{\mathbf{Q}_{\mathbf{Y\!Y}}^{-1}}^2\\
& +\! \left\|\! \mathrm{vec} \!\left ( \!\mathbf{N} \!-\!\mathbf{\bar N}\!\right )\! \right\|_{\mathbf{Q}_{{\mathbf{\hat N}_{_\text{AC}}\!\mathbf{\hat N}_{_\text{AC}}}}^{-1}}^2\!\! \!+\! \left\| \mathrm{vec} \!\left ( \mathbf{R} \!-\!\mathbf{\hat R}_{_\text{AC}}\!\!\left (\mathbf{N} \right )\!\right )\! \right\|_{\mathbf{Q}_{{\mathbf{\hat R}_{_\text{AC}}}\!\left({\mathbf{N}}\right){\mathbf{\hat R}_{_\text{AC}}}\!\left({\mathbf{N}}\right)}^{-1}}^2\\
&+ 2\mathrm{vec}\! \left ( \mathbf{N} \!-\!\mathbf{\bar N}\right )^{\text{T}}{\mathbf{Q}_{{\mathbf{\hat N}_{_\text{AC}}\!\mathbf{\hat N}_{_\text{AC}}}}^{-1}}\!\mathrm{vec}\! \left ( \!\mathbf{\bar N} \!-\!\mathbf{\hat N}_{_\text{AC}}\!\right ) \\
 &- \left\| \!\mathrm{vec} \!\left ( \!\mathbf{\bar R} \!-\!\mathbf{\hat R}_{_\text{AC}}\!\!\left (\!\mathbf{\bar N} \!\right )\!\right ) \!\right\|^2_{\mathbf{Q}_{{\mathbf{\hat R}_{_\text{AC}}}\!\left({\mathbf{N}}\right){\mathbf{\hat R}_{_\text{AC}}}\!\left({\mathbf{N}}\right)}^{-1}}.
\end{aligned}
$}
\label{eq:le1}
\end{equation}
\end{lemma}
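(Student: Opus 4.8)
The plan is to obtain \eref{eq:le1} directly from \lref{th:decom} by \emph{partially undoing} the square that was completed in the $\mathbf{R}$--variable: identity \eref{eq:th1} keeps the quadratic $\mathbf{R}$--term centred at $\bar{\mathbf{R}}(\mathbf{N})$ and carries a cross term, whereas \eref{eq:le1} re-centres that quadratic term at the $\mathbf{N}$--conditioned LS point $\hat{\mathbf{R}}_{_\text{AC}}(\mathbf{N})$ and absorbs the discrepancy into a single subtracted constant. First I would observe that the first, second and fourth terms on the right-hand sides of \eref{eq:th1} and \eref{eq:le1} are literally the same expressions, so it suffices to show that the sum of the third and fifth terms of \eref{eq:th1} equals the sum of the third term and the subtracted fifth term of \eref{eq:le1}. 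Both of these groups are built from the single weight $\mathbf{M}\triangleq\mathbf{Q}_{{\mathbf{\hat R}_{_\text{AC}}}\!\left({\mathbf{N}}\right){\mathbf{\hat R}_{_\text{AC}}}\!\left({\mathbf{N}}\right)}^{-1}$, which involves only the LS covariances and $\mathbf{N}$ and not the anchor $(\bar{\mathbf{R}},\bar{\mathbf{N}})$, so the $\mathbf{R}$--part may legitimately be treated as one quadratic form with a fixed metric.

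The structural fact that makes the computation collapse is the following. Using the same $\mathbf{N}$--conditioning convention that defines $\hat{\mathbf{R}}_{_\text{AC}}(\mathbf{N})$ after \eref{eq:decom2}, both $\hat{\mathbf{R}}_{_\text{AC}}(\mathbf{N})$ and $\bar{\mathbf{R}}(\mathbf{N})$ are produced from their respective anchors $\hat{\mathbf{R}}_{_\text{AC}}$ and $\bar{\mathbf{R}}$ by the \emph{same} affine correction $-\,\mathbf{G}\operatorname{vec}(\,\cdot-\mathbf{N})$, with $\mathbf{G}\triangleq\mathbf{Q}_{_{{\mathbf{\hat R}_{_\text{AC}}}{\mathbf{\hat N}_{_\text{AC}}}}}\mathbf{Q}_{_{{\mathbf{\hat N}_{_\text{AC}}}{\mathbf{\hat N}_{_\text{AC}}}}}^{-1}$ and ``$\,\cdot\,$'' being $\hat{\mathbf{N}}_{_\text{AC}}$, respectively $\bar{\mathbf{N}}$. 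Subtracting the two updates, the $\mathbf{N}$--dependent pieces cancel and one is left with the $\mathbf{N}$--free identity
\[
\operatorname{vec}\!\big(\bar{\mathbf{R}}(\mathbf{N})\big)-\operatorname{vec}\!\big(\hat{\mathbf{R}}_{_\text{AC}}(\mathbf{N})\big)=\operatorname{vec}(\bar{\mathbf{R}}-\hat{\mathbf{R}}_{_\text{AC}})-\mathbf{G}\operatorname{vec}(\bar{\mathbf{N}}-\hat{\mathbf{N}}_{_\text{AC}})=\operatorname{vec}\!\big(\bar{\mathbf{R}}-\hat{\mathbf{R}}_{_\text{AC}}(\bar{\mathbf{N}})\big),
\]
so that, with the shorthand $\mathbf{a}\triangleq\operatorname{vec}(\mathbf{R}-\bar{\mathbf{R}}(\mathbf{N}))$ and $\mathbf{c}\triangleq\operatorname{vec}(\bar{\mathbf{R}}-\hat{\mathbf{R}}_{_\text{AC}}(\bar{\mathbf{N}}))$, one has $\operatorname{vec}(\mathbf{R}-\hat{\mathbf{R}}_{_\text{AC}}(\mathbf{N}))=\mathbf{a}+\mathbf{c}$.

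With this in hand, the third plus fifth terms of \eref{eq:th1} are exactly $\mathbf{a}^{\text{T}}\mathbf{M}\mathbf{a}+2\,\mathbf{a}^{\text{T}}\mathbf{M}\mathbf{c}$, and the elementary identity $\mathbf{a}^{\text{T}}\mathbf{M}\mathbf{a}+2\,\mathbf{a}^{\text{T}}\mathbf{M}\mathbf{c}=(\mathbf{a}+\mathbf{c})^{\text{T}}\mathbf{M}(\mathbf{a}+\mathbf{c})-\mathbf{c}^{\text{T}}\mathbf{M}\mathbf{c}$ rewrites them as $\big\|\operatorname{vec}(\mathbf{R}-\hat{\mathbf{R}}_{_\text{AC}}(\mathbf{N}))\big\|_{\mathbf{M}}^{2}-\big\|\operatorname{vec}(\bar{\mathbf{R}}-\hat{\mathbf{R}}_{_\text{AC}}(\bar{\mathbf{N}}))\big\|_{\mathbf{M}}^{2}$, which are precisely the third term and the subtracted fifth term of \eref{eq:le1}; plugging this back into \eref{eq:th1} gives \eref{eq:le1}. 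The only place that needs genuine care is the displayed $\mathbf{N}$--free identity, i.e.\ checking that $\bar{\mathbf{R}}(\mathbf{N})$ really is built with the same gain $\mathbf{G}$ as $\hat{\mathbf{R}}_{_\text{AC}}(\mathbf{N})$ so that the $\mathbf{N}$--terms truly drop out, and noting that $\mathbf{M}$ is anchor-independent so the two groups may share one metric; everything after that is just the binomial expansion of a fixed quadratic form.
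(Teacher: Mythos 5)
Your proposal is correct and follows essentially the same route as the paper's Appendix~C: both proofs hinge on the anchor-cancellation identity $\mathbf{\bar R}-\mathbf{\hat R}_{_\text{AC}}(\mathbf{\bar N})=\mathbf{\bar R}(\mathbf{N})-\mathbf{\hat R}_{_\text{AC}}(\mathbf{N})$ (which the paper states without your explicit same-gain justification) and then complete the square in the $\mathbf{R}$-block under the fixed metric $\mathbf{Q}_{{\mathbf{\hat R}_{_\text{AC}}}\!\left({\mathbf{N}}\right){\mathbf{\hat R}_{_\text{AC}}}\!\left({\mathbf{N}}\right)}^{-1}$, leaving the $\mathbf{N}$-terms untouched. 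Your binomial identity $\mathbf{a}^{\text{T}}\mathbf{M}\mathbf{a}+2\mathbf{a}^{\text{T}}\mathbf{M}\mathbf{c}=(\mathbf{a}+\mathbf{c})^{\text{T}}\mathbf{M}(\mathbf{a}+\mathbf{c})-\mathbf{c}^{\text{T}}\mathbf{M}\mathbf{c}$ is exactly the paper's add-and-subtract of $\left\|\mathrm{vec}(\mathbf{\bar R}-\mathbf{\hat R}_{_\text{AC}}(\mathbf{\bar N}))\right\|^2$ in a different notation.
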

\begin{proof} 
See Appendix C.
\end{proof}

\lref{th:decom} and \lref{le:decom} hold for an arbitrary point, so as to ${{{\mathbf{\hat R}}}_{_\text{RM}}}$ and ${{{\mathbf{\hat N}}}_{_\text{RM}}}$. If we replace ${{\mathbf{\bar R}}}$ and ${{\mathbf{\bar N}}}$ using ${{{\mathbf{\hat R}}}_{_\text{RM}}}$ and ${{{\mathbf{\hat N}}}_{_\text{RM}}}$, the first and last terms on the right-hand side of \eref{eq:le1} are independent of the parameters. Then we achieve the equivalent form of \eref{eq:ocam} as follows
\begin{equation}
\mathop {\min }\limits_{ {\mathbf{N}} \in {\mathbb{Z}^{\mathcal{S} \times\mathcal{A}}}} \!\!\mathcal{C} \!\left ( \mathbf{N} \right ),
\label{eq:fdecom1}
\end{equation}
with
\begin{equation}
\resizebox{.88\hsize}{!}{
$
\begin{aligned}
\mathcal{C} \!\left ( \mathbf{N} \right )=&\left\|\! \mathrm{vec} \!\!\left ( \!\mathbf{N} \!-\!{{{\mathbf{\hat N}}}_{_\text{RM}}}\!\right )\! \right\|_{\mathbf{Q}_{{\mathbf{\hat N}_{_\text{AC}}\!\mathbf{\hat N}_{_\text{AC}}}}^{-1}}^2\!\! \!+\!2\mathrm{vec}\!\! \left ( \!\mathbf{N} \!-\!{{{\mathbf{\hat N}}}_{_\text{RM}}}\!\right )^{\text{T}}\!{\mathbf{Q}_{{\mathbf{\hat N}_{_\text{AC}}\!\mathbf{\hat N}_{_\text{AC}}}}^{-1}}\!\!\mathrm{vec}\!\! \left ( \!{{{\mathbf{\hat N}}}_{_\text{RM}}} \!-\!\mathbf{\hat N}_{_\text{AC}}\!\right )  \\
&  + \mathop {\min }\limits_{ {\mathbf{R}} \in {\mathbb{O}^{3 \times q}}} \!\left\| \mathrm{vec} \!\!\left ( \!\mathbf{R} \!-\!\mathbf{\hat R}_{_\text{AC}}\!\!\left (\!\mathbf{N} \!\right )\!\right )\! \right\|_{\mathbf{Q}_{{\mathbf{\hat R}_{_\text{AC}}}\!\left(\!{\mathbf{N}}\!\right){\mathbf{\hat R}_{_\text{AC}}}\!\left(\!{\mathbf{N}}\!\right)}^{-1}}^2.
\end{aligned}
$}
\label{eq:fdecom2}
\end{equation}
The function $\mathcal{C} \!\left ( \mathbf{N} \right )$ includes three coupled terms. 

If we ignore the last two terms in \eref{eq:fdecom2}, we can define a loose-form solution of RieMOCAD (RieMOCAD-LF); that is, we calculate the closest integer estimations to the float solution ${{{\mathbf{\hat N}}}_{_\text{RM}}}$ in the metric of the associated weight matrix 
\begin{equation}
\begin{aligned}
\mathop {\min }\limits_{ {\mathbf{N}} \in {\mathbb{Z}^{\mathcal{S} \times\mathcal{A}}}} \! \left\|\! \mathrm{vec} \!\!\left ( \!\mathbf{N} \!-\!{{{\mathbf{\hat N}}}_{_\text{RM}}}\!\right )\! \right\|_{\mathbf{Q}_{{\mathbf{\hat N}_{_\text{AC}}\!\mathbf{\hat N}_{_\text{AC}}}}^{-1}}^2.
\end{aligned}
\label{eq:simt2}
\end{equation}
RieMOCAD-LF, \eref{eq:simt2}, only partially employs the orthonormality constraint, which can be used as a low-complexity option in many practical applications. To fully incorporate this constraint, one needs to achieve the tight-form solution of RieMOCAD (RieMOCAD-TF), i.e., minimizing \eref{eq:fdecom1}. \sref{sec:search} presents the details to solve the optimization problem \eref{eq:fdecom1}. Note that \eref{eq:fdecom1} provides only one possible tight-form solution of the OC-AM, and \eref{eq:lambda3} is another alternative. In this contribution, \eref{eq:fdecom1} is applied rather than \eref{eq:lambda3} to accelerate the process of the integer ambiguity resolution.

\subsubsection{Integer Search Strategy}
\label{sec:search}
To seek for the optimal integer matrix that minimizes \eref{eq:fdecom1}, a search process is required and the search space is defined as
\begin{equation}
  \Omega \! \left ( \chi  \right )  =\left \{ {\mathbf{N}} \in {\mathbb{Z}^{\mathcal{S} \times\mathcal{A}}}  \left|\mathcal{C} \! \left ( {\mathbf{N}}  \right )\right. < \chi\right \}.
\end{equation}
It requires high computational load to evaluate the integer matrix in $\Omega \! \left ( \chi  \right )$ due to the complex optimization caused by the presence of the orthonormality constraint. In other words, we need to evaluate every candidate in $\Omega \! \left ( \chi  \right )$ by solving the orthonormality-constrained optimization using the methods discussed in \sref{sec:RieOpt}. Similar to the MC-LAMBDA method, the cost function $\mathcal{C} \!\left ( \mathbf{N} \right )$ is also bounded by two easier-to-evaluate functions, which are formulated as
\begin{equation}
\resizebox{.88\hsize}{!}{
$
\begin{aligned}
\mathcal{C}_{_{\text{L}}} \!\!\left ( \mathbf{N} \right )=&\left\|\! \mathrm{vec} \!\!\left ( \!\mathbf{N} \!-\!{{{\mathbf{\hat N}}}_{_\text{RM}}}\!\right )\! \right\|_{\mathbf{Q}_{{\mathbf{\hat N}_{_\text{AC}}\!\mathbf{\hat N}_{_\text{AC}}}}^{-1}}^2\!\! \!+\!2\mathrm{vec}\!\! \left ( \!\mathbf{N} \!-\!{{{\mathbf{\hat N}}}_{_\text{RM}}}\!\right )^{\text{T}}\!{\mathbf{Q}_{{\mathbf{\hat N}_{_\text{AC}}\!\mathbf{\hat N}_{_\text{AC}}}}^{-1}}\!\!\mathrm{vec}\!\! \left ( \!{{{\mathbf{\hat N}}}_{_\text{RM}}} \!-\!\mathbf{\hat N}_{_\text{AC}}\!\right )  \\
& + \xi_{\min} \! \sum_{i=1}^{q}\!\left( \left\|{\mathbf{\hat r}_{i}}\!\left(\!{\mathbf{N}}\!\right) \!\right\|-1\right)^2,\\
\mathcal{C}_{_{\text{U}}} \!\!\left ( \mathbf{N} \right )=&\left\|\! \mathrm{vec} \!\!\left ( \!\mathbf{N} \!-\!{{{\mathbf{\hat N}}}_{_\text{RM}}}\!\right )\! \right\|_{\mathbf{Q}_{{\mathbf{\hat N}_{_\text{AC}}\!\mathbf{\hat N}_{_\text{AC}}}}^{-1}}^2\!\! \!+\!2\mathrm{vec}\!\! \left ( \!\mathbf{N} \!-\!{{{\mathbf{\hat N}}}_{_\text{RM}}}\!\right )^{\text{T}}\!{\mathbf{Q}_{{\mathbf{\hat N}_{_\text{AC}}\!\mathbf{\hat N}_{_\text{AC}}}}^{-1}}\!\!\mathrm{vec}\!\! \left ( \!{{{\mathbf{\hat N}}}_{_\text{RM}}} \!-\!\mathbf{\hat N}_{_\text{AC}}\!\right )  \\
& + \xi_{\max} \! \sum_{i=1}^{q}\!\left( \left\|{\mathbf{\hat r}_{i}}\!\left(\!{\mathbf{N}}\!\right) \!\right\|-1\right)^2,
\end{aligned}
$}
\label{eq:bound}
\end{equation}
with
\begin{equation}
\mathcal{C}_{_{\text{L}}} \!\!\left ( \mathbf{N} \right ) \leq \mathcal{C} \!\left ( \mathbf{N} \right ) \leq \mathcal{C}_{_{\text{U}}} \!\!\left ( \mathbf{N} \right )
\end{equation}
where ${\mathbf{\hat r}_{i}}\!\left(\!{\mathbf{N}}\!\right)$ is the $i$-th column of ${\mathbf{\hat R}_{_\text{AC}}}\!\left(\!{\mathbf{N}}\!\right)$, and $\xi_{\min}$ and $\xi_{\max}$ denote the smallest and largest eigenvalues of ${\mathbf{Q}_{{\mathbf{\hat R}_{_\text{AC}}}\!\left({\mathbf{N}}\right){\mathbf{\hat R}_{_\text{AC}}}\!\left({\mathbf{N}}\right)}^{-1}}$, respectively.

Based on the lower and upper bounds of $\mathcal{C} \!\left ( \mathbf{N} \right )$, we can define the corresponding search space as
\begin{equation}
\begin{aligned}
  \Omega_{_{\text{L}}}\!\!\left ( \chi  \right )  =\left \{ {\mathbf{N}} \in {\mathbb{Z}^{\mathcal{S} \times\mathcal{A}}}  \left|\mathcal{C}_{_{\text{L}}} \!\! \left ( {\mathbf{N}}  \right )\right. < \chi\right \},\\
  \Omega_{_{\text{U}}}\!\!\left ( \chi  \right )  =\left \{ {\mathbf{N}} \in {\mathbb{Z}^{\mathcal{S} \times\mathcal{A}}}  \left|\mathcal{C}_{_{\text{U}}} \!\! \left ( {\mathbf{N}}  \right )\right. < \chi\right \},
  \end{aligned}
  \label{eq:setbound}
\end{equation}
with
\begin{equation}
\Omega_{_{\text{U}}}\!\!\left ( \chi  \right ) \subseteq \Omega\!\left ( \chi  \right ) \subseteq \Omega_{_{\text{L}}}\!\!\left ( \chi  \right ).
\end{equation}
Then, two search approaches, namely the search-and-shrink and search-and-expand strategies, can be applied to adjust the size of the search space adaptively so as to reduce the computational complexity significantly.

The search-and-shrink strategy takes advantage of the upper bound $\mathcal{C}_{_{\text{U}}} \!\!\left ( \mathbf{N} \right )$ and the corresponding set $\Omega_{_{\text{U}}}\!\!\left ( \chi  \right )$. Initially, a large $\chi_0>0$ is chosen to initialize $\chi$ such that $\Omega_{_{\text{U}}}\!\!\left ( \chi  \right )$ and $\Omega \!\left ( \chi  \right )$ are nonempty. Find an integer matrix $\mathbf{N}_1 \in \Omega_{_{\text{U}}}\!\!\left ( \chi  \right )$ with $\mathcal{C}_{_{\text{U}}} \!\!\left ( \mathbf{N}_1 \right ) = \chi_1 < \chi$. Then, shrink the search space by replacing $\chi = \chi_1$ and continue to look for an integer matrix $\mathbf{N}_2 \in \Omega_{_{\text{U}}}\!\!\left ( \chi  \right )$ with $\mathcal{C}_{_{\text{U}}} \!\!\left ( \mathbf{N}_2 \right ) = \chi_2 < \chi$. Iterate the search procedures until there is only one integer matrix $\mathbf{N}_{k}$ in the search space $\Omega_{_{\text{U}}}\!\!\left ( \chi  \right )$ (assume k iterations and $\chi = \chi_k$). $\mathbf{N} = \mathbf{N}_{k}$ minimizes $\mathcal{C}_{_{\text{U}}} \!\!\left ( \mathbf{N} \right )$ and $\chi =  \mathcal{C}_{_{\text{U}}} \!\!\left ( \mathbf{N}_{k} \right )$. Note that $\mathbf{N} = \mathbf{N}_{k}$ does have to minimize $\mathcal{C}\!\left ( \mathbf{N} \right )$. Therefore, we need to evaluate $\Omega\!\left (\chi  \right )$ using Riemannian optimization algorithms.
Finally, we seek for $\mathbf{N} \in \Omega\!\left ( \chi  \right )$ that minimizes $\mathcal{C}\!\left ( \mathbf{N} \right )$ as the estimation.

In contrast, the search-and-expand strategy utilizes the lower bound $\mathcal{C}_{_{\text{L}}} \!\!\left ( \mathbf{N} \right )$ and the associated search space $\Omega_{_{\text{L}}}\!\!\left ( \chi  \right )$. Initially, a small $\chi_0>0$ is chosen to initialize $\chi$. If $\Omega_{_{\text{L}}}\!\!\left ( \chi  \right )$ is empty, replace $\chi$ using a larger value $\chi_1$ and check $\Omega_{_{\text{L}}}\!\!\left ( \chi  \right )$ again. Iterate the search procedures (assume k steps) until $\Omega_{_{\text{L}}}\!\!\left ( \chi \right )$ with $\chi = \chi_k$ is nonempty. Then, evaluate all the candidates in $\Omega_{_{\text{L}}}\!\!\left ( \chi \right )$ using $\mathcal{C} \! \left ( {\mathbf{N}}  \right )$, that is
\begin{equation}
\Omega\!\left ( \chi  \right ) = \left \{ {\mathbf{N}} \in \Omega_{_{\text{L}}}\!\!\left ( \chi \right )  \left|\mathcal{C} \! \left ( {\mathbf{N}}  \right )\right. < \chi\right \}.
\label{eq:lowset}
\end{equation}
If $\Omega\!\left ( \chi  \right )$ is nonempty, we extract $\mathbf{N} \in \Omega\!\left ( \chi  \right )$ that minimizes $\mathcal{C}\!\left ( \mathbf{N} \right )$ as the finial estimation. Otherwise, let $\chi = \chi_{k+1} > \chi_{k}$, $k = k+1$ and repeat the search procedures until $\Omega\!\left ( \chi \right )$ is nonempty. 



\begin{algorithm}[t]
\begin{algorithmic}[1]
\STATE Compute ${{{\mathbf{\hat R}}}_{_\text{AC}} }$ and ${{{\mathbf{\hat N}}}_{_\text{AC}} }$ based on \eref{eq:float2}.
\STATE Solve \eref{eq:rm1} to obtain ${{{\mathbf{\hat R}}}_{_\text{RM}}}$ using the Riemannian algorithms.
\STATE Compute ${{{\mathbf{\hat N}}}_{_\text{RM}}}$ using \eref{eq:rm1cond1} and \eref{eq:rm1cond}.
\STATE Initialize $k=0$ and $\chi = \chi_0>0$ such that $\Omega_{_{\text{U}}}\!\!\left ( \chi  \right )\neq \varnothing$ and $\Omega\!\left ( \chi  \right ) \neq \varnothing$.
\WHILE {$\text{card}\{\Omega_{_{\text{U}}}\!\!\left ( \chi  \right )\}>1$}
\STATE $k = k+1$.
\STATE Find $\mathbf{N}_k \in \Omega_{_{\text{U}}}\!\!\left ( \chi  \right )$ with $\mathcal{C}_{_{\text{U}}} \!\!\left ( \mathbf{N}_k \right ) = \chi_k < \chi$.
\STATE $\chi=\chi_k$.
\STATE Update $\Omega_{_{\text{U}}}\!\!\left ( \chi  \right )$ based on \eref{eq:bound} and \eref{eq:setbound}.
\ENDWHILE
\STATE Establish $\Omega\!\left ( \chi  \right )$ using the Riemannian algorithms.
\STATE Choose ${\mathbf{N}} \!\in\! \Omega\!\left ( \chi  \right )$ that minimizes $\mathcal{C}\!\left ( \mathbf{N} \right )$.
\end{algorithmic}
\caption{RieMOCAD using the search-and-shrink strategy.}
\label{alg0_shr}
\end{algorithm}

\begin{algorithm}[t]
\begin{algorithmic}[1]
\STATE Compute ${{{\mathbf{\hat R}}}_{_\text{AC}} }$ and ${{{\mathbf{\hat N}}}_{_\text{AC}} }$ based on \eref{eq:float2}.
\STATE Solve \eref{eq:rm1} to obtain ${{{\mathbf{\hat R}}}_{_\text{RM}}}$ using the Riemannian algorithms.
\STATE Compute ${{{\mathbf{\hat N}}}_{_\text{RM}}}$ using \eref{eq:rm1cond1} and \eref{eq:rm1cond}.
\STATE Initialize $\chi = \chi_0>0$ and $k = 0$.
\STATE Let $\Omega\!\left ( \chi  \right ) = \varnothing$.
\WHILE {$\Omega\!\left ( \chi  \right )=\varnothing$}
\STATE Set $\Omega_{_{\text{L}}}\!\!\left ( \chi  \right )$ based on \eref{eq:bound} and \eref{eq:setbound}.
\FORALL {${\mathbf{N}} \!\in\! \Omega_{_{\text{L}}}\!\!\left ( \chi  \right )$}
\STATE Calculate $\mathcal{C} \!\left ( \mathbf{N} \right )$ using the Riemannian algorithms.

\IF {$\mathcal{C} \!\left ( \mathbf{N} \right )<\chi$}
\STATE $\Omega\!\left ( \chi  \right ) = \Omega\!\left ( \chi  \right ) \cup {\mathbf{N}}$.
\ENDIF
\ENDFOR
\IF {$\Omega\!\left ( \chi  \right )=\varnothing$}
\STATE $\chi = \chi_{k+1} >\chi_{k}$.
\STATE $k = k + 1$.
\ENDIF
\ENDWHILE
\STATE Choose ${\mathbf{N}} \!\in\! \Omega\!\left ( \chi  \right )$ that minimizes $\mathcal{C}\!\left ( \mathbf{N} \right )$.
\end{algorithmic}
\caption{RieMOCAD using the search-and-expand strategy.}
\label{alg0_exp}
\end{algorithm}

\algref{alg0_shr} and \algref{alg0_exp} summarize the RieMOCAD method (tight form) using the search-and-shrink and search-and-expand strategies, respectively. Compared with the existing methods, the major differences of the proposed approach are as follows. The RieMOCAD method requires the float solutions Least Squares and Riemannian optimization, whereas the MC-LAMBDA method requires only the Least Squares. The decomposition of the objective function is different between the RieMOCAD and MC-LAMBDA methods, leading to different cost functions used in the feasible set. Instead of solving the orthonormality-constrained least-squares problem through the gradient descent techniques, the RieMOCAD method employs the optimization over the Riemannian manifold in the search process.

\section{Performance Evaluation}
\label{sec: SimExp}
This section tests the proposed RieMOCAD method and compares it with several benchmarks using simulations as well as experimental data. For different scenarios, we analyze performance in terms of the quality of the float solution, success rate, and computational efficiency in the single-epoch and single-frequency case, with only GPS satellites utilized. Riemannian optimization algorithms are implemented based on the Manopt toolbox \cite{manopt}

\begin{figure*}[htbp]
\centering
\includegraphics[width=1.0\textwidth]{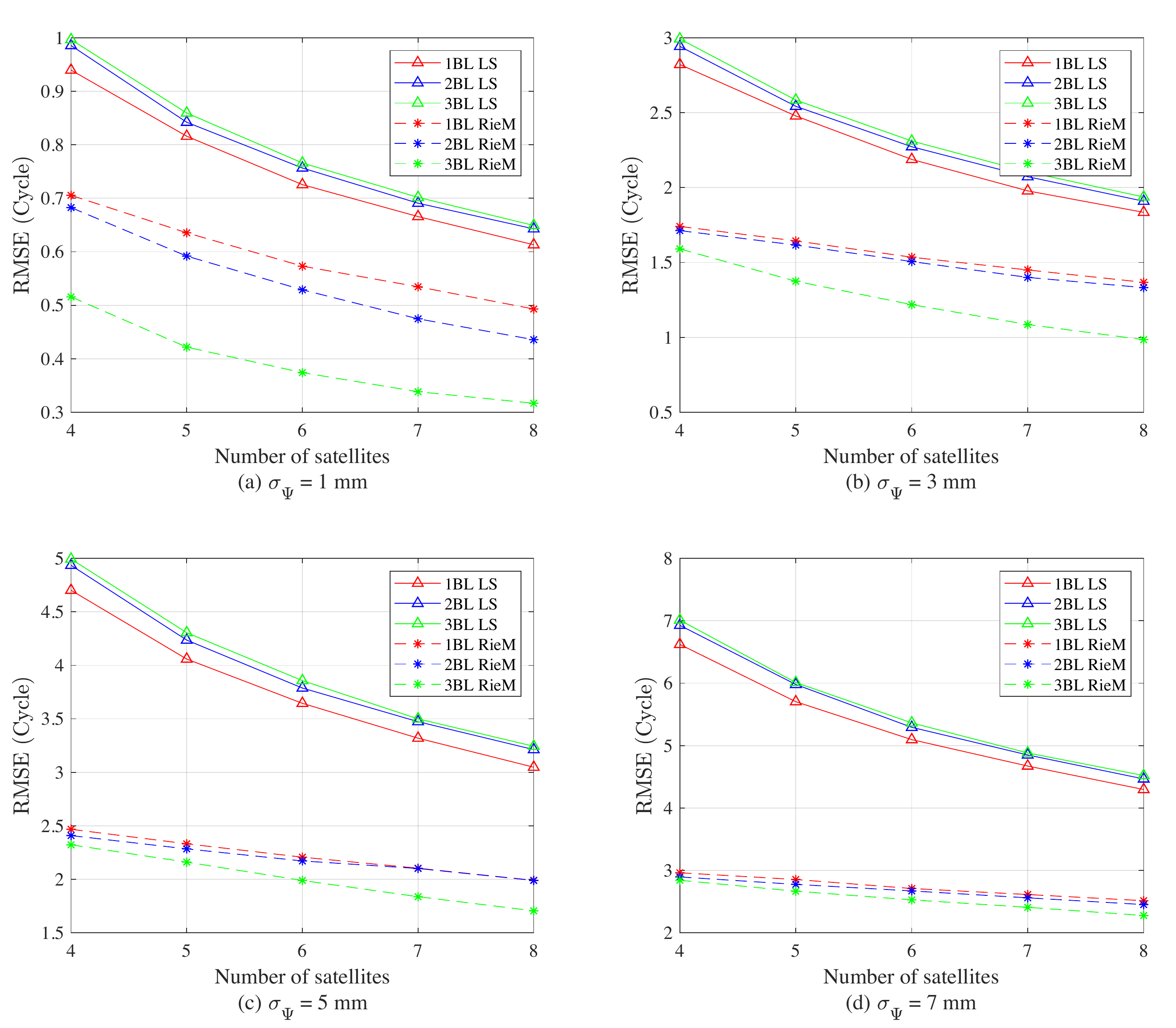}
\caption{RMSE of float ambiguity estimations (a single baseline to three baselines).}
\label{fig:AveFloatErr1}
\end{figure*}
\begin{figure*}[htbp]
\centering
\includegraphics[width=1.0\textwidth]{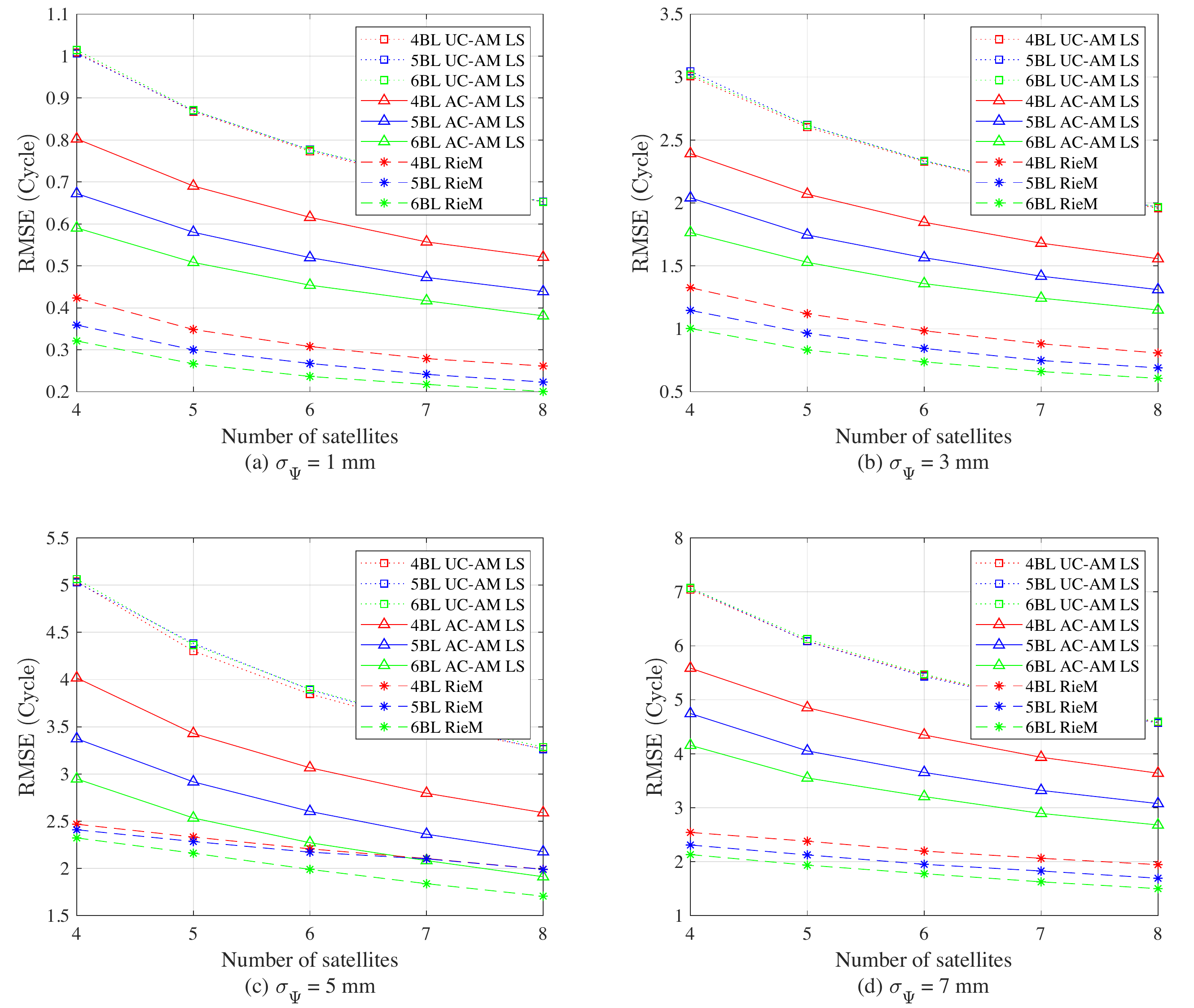}
\caption{RMSE of float ambiguity estimations (four baselines to six baselines).}
\label{fig:AveFloatErr2}
\end{figure*}

\begin{table*}[htbp]
\centering
  \caption{Success rate (\%) for different measurement precision $\sigma_{\Psi} (\text{mm})$, number of tracked satellites (\#Sat), and number of baselines. \\UC-AM/AC-AM  \\RieMOCAD-LF}
    \begin{tabular}{c|cccc|cccc|cccc}
    \hline
     & \multicolumn{4}{c|}{Single Baseline} & \multicolumn{4}{c|}{Two Baselines} & \multicolumn{4}{c}{Three Baselines}\\
    \hline
    \diagbox{\#Sat}{$\sigma_{\Psi}$}  &7 &5 &3 &1       &7 &5 &3 &1  &7 &5 &3 &1\\
    \hline
    \multirow{2}[2]{*}{4}

           &0.01  &0.06  &0.30  &6.84     &0  &0  &0  &0.75        &0  &0  &0  &0.11  \\
           &\textbf{2.74}  &\textbf{3.76}  &\textbf{6.00}  &\textbf{22.53}     &\textbf{0.35}  &\textbf{0.45}  &\textbf{1.14}  &\textbf{11.63}      &\textbf{0.06}  &\textbf{0.16}  &\textbf{0.92}  &\textbf{18.07}  \\

    \hline
    \multirow{2}[2]{*}{5}

           &0.19  &0.54  &3.29  &61.60     &0  &0  &0.18  &48.85         &0  &0  &0.02  &40.66  \\
           &\textbf{2.94}  &\textbf{5.20}  &\textbf{13.26}  &\textbf{73.74}     &\textbf{0.22}  &\textbf{0.75}  &\textbf{3.84}  &\textbf{68.75}      &\textbf{0.19}  &\textbf{0.20}  &\textbf{3.11}  &\textbf{76.01}  \\       
    \hline
    \multirow{2}[2]{*}{6} 

           &0.83  &3.90  &25.07  &99.11     &0  &0.24  &9.83  &99.19             &0  &0.02  &4.45  &99.50  \\
           &\textbf{5.73}  &\textbf{13.75}  &\textbf{43.48}  &\textbf{99.35}     &\textbf{0.67}  &\textbf{3.95}  &\textbf{29.25}  &\textbf{99.54}        &\textbf{0.26}  &\textbf{2.38}  &\textbf{29.92}  &\textbf{99.96}  \\   
    \hline
    \multirow{2}[2]{*}{7}  

           &4.13  &16.70  &68.34  &\textbf{99.98}     &0.38  &5.38  &60.07  &\textbf{99.99}           &0.04  &1.74  &56.75  &\textbf{100}  \\
           &\textbf{13.47}  &\textbf{33.43}  &\textbf{79.78}  &\textbf{99.98}     &\textbf{3.76}  &\textbf{18.67}  &\textbf{76.64}  &\textbf{99.99}          &\textbf{1.85}  &\textbf{14.77}  &\textbf{81.03}  &\textbf{100} \\
    \hline
    \multirow{2}[2]{*}{8}  

           &13.41  &46.71  &94.62  &\textbf{100}     &3.90  &33.86  &94.51  &\textbf{100}         &1.35  &26.85  &94.84  &\textbf{100}  \\
           &\textbf{29.06}  &\textbf{63.78}  &\textbf{96.75}  &\textbf{100}     &\textbf{16.29}  &\textbf{57.14}  &\textbf{97.14}  &\textbf{100}         &\textbf{11.60}  &\textbf{56.91}  &\textbf{98.03}  &\textbf{100}  \\
    \hline
    \end{tabular}%
  \label{tab:suc1}%
\end{table*}%

\begin{table*}[htbp]
\centering
  \caption{Success rate (\%) for different measurement precision $\sigma_{\Psi} (\text{mm})$, number of tracked satellites (\#Sat), and number of baselines. \\UC-AM  \\AC-AM  \\RieMOCAD-LF}
    \begin{tabular}{c|cccc|cccc|cccc}
    \hline
     & \multicolumn{4}{c|}{Four Baselines} & \multicolumn{4}{c|}{Five Baselines} & \multicolumn{4}{c}{Six Baselines}\\
    \hline
    \diagbox{\#Sat}{$\sigma_{\Psi}$}  &7 &5 &3 &1       &7 &5 &3 &1  &7 &5 &3 &1\\
    \hline
    \multirow{3}[2]{*}{4}

           &0  &0  &0  &0.01     &0  &0  &0  &0.01        &0  &0  &0  &0  \\
           &0  &0  &0.03  &40.78     &0  &0.02  &4.10  &96.48      &0.03  &1.68  &54.60  &\textbf{99.89}  \\
           &\textbf{0.03}  &\textbf{0.17}  &\textbf{2.82}  &\textbf{75.91}     &\textbf{0.13}  &\textbf{1.64}  &\textbf{23.67}  &\textbf{98.41}         &\textbf{1.38}  &\textbf{13.40}  &\textbf{73.47}  &\textbf{99.89}  \\

    \hline
    \multirow{3}[2]{*}{5}

           &0  &0  &0  &35.26     &0  &0  &0  &30.96         &0  &0  &0  &27.91  \\
           &0  &0.13  &13.33  &96.44     &0.56  &14.42  &81.12  &99.97      &15.34  &74.92  &97.59  &\textbf{100}  \\      
           &\textbf{0.33}  &\textbf{3.54}  &\textbf{38.26}  &\textbf{98.72}     &\textbf{6.02}  &\textbf{37.51}  &\textbf{88.72}  &\textbf{99.99}         &\textbf{37.71}  &\textbf{83.80}  &\textbf{98.71}  &\textbf{100}  \\
    \hline
    \multirow{3}[2]{*}{6} 

           &0  &0.02  &1.98  &99.47     &0  &0  &1.19  &99.59             &0  &0  &0.73  &99.62  \\
           &0.72  &16.78  &88.09  &\textbf{100}     &31.20  &91.65  &99.70  &\textbf{100}         &91.70  &99.54  &\textbf{99.99}  &\textbf{100}  \\   
           &\textbf{7.22}  &\textbf{42.51}  &\textbf{93.96}  &\textbf{100}     &\textbf{57.96}  &\textbf{95.46}  &\textbf{99.82}  &\textbf{100}         &\textbf{95.28}  &\textbf{99.63}  &\textbf{99.99}  &\textbf{100}  \\
    \hline
    \multirow{3}[2]{*}{7}  

           &0  &0.78  &51.69  &\textbf{100}     &0  &0.38  &49.41  &\textbf{100}           &0  &0.14  &46.86  &\textbf{100}  \\
           &17.78  &80.22  &99.12  &\textbf{100}     &91.81  &99.28  &\textbf{100}  &\textbf{100}          &99.72  &\textbf{99.96}  &\textbf{100}  &\textbf{100}  \\
           &\textbf{43.27}  &\textbf{89.35} &\textbf{99.69}  &\textbf{100}     &\textbf{96.08}  &\textbf{99.61}  &\textbf{100}  &\textbf{100}         &\textbf{99.79}  &\textbf{99.96}  &\textbf{100}  &\textbf{100}  \\
    \hline
    \multirow{3}[2]{*}{8}  

           &0.56  &23.75  &94.35  &\textbf{100}     &0.29  &20.29  &94.32  &\textbf{100}          &0.09  &17.24  &94.11  &\textbf{100}  \\
           &73.20  &97.80  &99.99  &\textbf{100}     &99.55  &\textbf{99.99}  &\textbf{100}  &\textbf{100}         &99.98  &\textbf{100}  &\textbf{100}  &\textbf{100}  \\
           &\textbf{86.13}  &\textbf{98.84}  &\textbf{100}  &\textbf{100}     &\textbf{99.69}  &\textbf{99.99}  &\textbf{100}  &\textbf{100}         &\textbf{99.99}  &\textbf{100}  &\textbf{100}  &\textbf{100}  \\
    \hline
    \end{tabular}%
  \label{tab:suc2}%
\end{table*}%

\begin{table*}[htbp]
\centering
  \caption{Success rate (\%) for different measurement precision $\sigma_{\Psi} (\text{mm})$, number of tracked satellites (\#Sat), and number of baselines. \\MC-LAMBDA  \\RieMOCAD-TF}
    \begin{tabular}{c|cccc|cccc|cccc}
    \hline
     & \multicolumn{4}{c|}{Single Baseline} & \multicolumn{4}{c|}{Two Baselines} & \multicolumn{4}{c}{Three Baselines}\\
    \hline
    \diagbox{\#Sat}{$\sigma_{\Psi}$}  &7 &5 &3 &1       &7 &5 &3 &1  &7 &5 &3 &1\\
    \hline
    \multirow{2}[2]{*}{4}

           &8.39  &14.46  &28.00  &\textbf{81.18}     &0  &0  &4.45  &97.91        &0  &0  &0.12  &50.43  \\
           &\textbf{10.19}  &\textbf{14.82}  &\textbf{28.01}  &\textbf{81.18}     &\textbf{6.39}  &\textbf{12.88}  &\textbf{39.32}  &\textbf{98.39}       &\textbf{2.25}  &\textbf{6.88}  &\textbf{16.55}  &\textbf{92.22}  \\

    \hline
    \multirow{2}[2]{*}{5}

           &21.80  &37.74  &70.79  &\textbf{99.20}     &1.76  &11.78  &79.99  &\textbf{100}        &0  &0.29  &20.11  &\textbf{100}  \\
           &\textbf{21.91}  &\textbf{37.77}  &\textbf{70.81}  &99.19     &\textbf{20.90}  &\textbf{44.89}  &\textbf{92.21}  &\textbf{100}        &\textbf{3.99}  &\textbf{15.38}  &\textbf{67.75}  &\textbf{100}  \\       
    \hline
    \multirow{2}[2]{*}{6} 

           &46.54  &\textbf{71.67} &95.41  &\textbf{99.98}     &18.44  &71.96  &\textbf{99.85}  &\textbf{100}        &0.88  &16.79  &97.36  &\textbf{100}  \\
           &\textbf{46.55}  &\textbf{71.67}  &\textbf{95.42}  &\textbf{99.98}     &\textbf{55.56}  &\textbf{91.77}  &\textbf{99.85}  &\textbf{100}        &\textbf{17.99}  &\textbf{61.80}  &\textbf{99.09}  &\textbf{100}  \\   
    \hline
    \multirow{2}[2]{*}{7}  

           &\textbf{72.80}   &\textbf{92.10}   &\textbf{99.42}   &\textbf{100}      &66.12  &98.72  &\textbf{100}   &\textbf{100}         &16.55  &84.63  &\textbf{100}   &\textbf{100}   \\
           &\textbf{72.80}   &\textbf{92.10}   &\textbf{99.42}   &\textbf{100}      &\textbf{89.55}   &\textbf{99.11}   &\textbf{100}   &\textbf{100}        &\textbf{56.31}   &\textbf{97.01}   &\textbf{100}   &\textbf{100}   \\
    \hline
    \multirow{2}[2]{*}{8}  

           &\textbf{88.19}  &\textbf{98.00}  &\textbf{99.92}  &\textbf{100}     &96.37  &\textbf{100}  &\textbf{100}  &\textbf{100}        &73.26  &\textbf{100}  &\textbf{100}  &\textbf{100}  \\
           &\textbf{88.19}  &\textbf{98.00}  &\textbf{99.92}  &\textbf{100}     &\textbf{99.23}  &\textbf{100}  &\textbf{100}  &\textbf{100}        &\textbf{93.11}  &\textbf{100}  &\textbf{100}  &\textbf{100}  \\
    \hline
    \end{tabular}%
  \label{tab:suc3}%
\end{table*}%

\begin{table*}[htbp]
\centering
  \caption{Number of integers in the search space for different measurement precision $\sigma_{\Psi} (\text{mm})$, number of tracked satellites (\#Sat), and number of baselines. \\MC-LAMBDA  \\RieMOCAD-TF}
  \scalebox{0.96}{
    \begin{tabular}{c|cccc|cccc|cccc}
    \hline
     & \multicolumn{4}{c|}{Single Baseline} & \multicolumn{4}{c|}{Two Baselines} & \multicolumn{4}{c}{Three Baselines}\\
    \hline
    \diagbox{\#Sat}{$\sigma_{\Psi}$}  &7 &5 &3 &1       &7 &5 &3 &1  &7 &5 &3 &1\\
    \hline
    \multirow{2}[2]{*}{4}
    &4268.66 &2164.82 &595.79 &39.13  &10000 &9992.12 &9990.35 &3097.07  &\textbf{10000} &\textbf{10000} &\textbf{10000} &9640.19\\
    &\textbf{420.28} &\textbf{277.91} &\textbf{140.19} &\textbf{23.37}  &\textbf{9960.78} &\textbf{9914.55} &\textbf{9776.32}  &\textbf{1607.71} &\textbf{10000} &\textbf{10000} &\textbf{10000} &\textbf{7163.28}\\  
    \hline
    \multirow{2}[2]{*}{5}
    &1403.85 &498.66 &97.43  &3.75 &9987.01 &9865.44  &6726.23   &12.09 &\textbf{10000} &10000 &9977.65 &52.43\\ 
    &\textbf{253.20} &\textbf{149.77} &\textbf{51.17} &\textbf{3.08} &\textbf{9873.52} &\textbf{9285.21}  &\textbf{4614.93}  &\textbf{7.22} &\textbf{10000} &\textbf{9998.11}  &\textbf{9314.99} &\textbf{30.11}\\        
    \hline
    \multirow{2}[2]{*}{6} 
    &496.25 &137.22  &17.29 &2.03  &9761.09 &7260.33 &415.42 &2.04 &10000 &9915.11 &3838.67 &2.07\\
    &\textbf{153.19} &\textbf{64.16} &\textbf{10.98} &\textbf{2.00} &\textbf{8892.29} &\textbf{4888.82} &\textbf{187.58}  &\textbf{2.00} &\textbf{9971.03} &\textbf{9098.00} &\textbf{1391.81} &\textbf{2.01}  \\      
    \hline
    \multirow{2}[2]{*}{7}  
    &177.78 &32.46 &3.53 &\textbf{2.00} &7238.69 &1414.66 &10.01 &\textbf{2.00} &9914.09 &6567.13 &34.27 &\textbf{2.00}\\
    &\textbf{74.23} &\textbf{18.09} &\textbf{2.85} &\textbf{2.00} &\textbf{5321.99} &\textbf{692.21}  &\textbf{5.90} &\textbf{2.00} &\textbf{9158.31} &\textbf{3335.94} &\textbf{12.12} &\textbf{2.00} \\   
    \hline
    \multirow{2}[2]{*}{8}  
    &58.86  &7.91  &2.20  &\textbf{2.00} &2679.34 &82.03 &2.31 &\textbf{2.00} &7650.06 &469.38 &2.74 &\textbf{2.00}\\
    &\textbf{30.22} &\textbf{5.60} &\textbf{2.01} &\textbf{2.00} &\textbf{1338.81} &\textbf{32.24}   &\textbf{2.07}  &\textbf{2.00} &\textbf{4607.26} &\textbf{116.35} &\textbf{2.45} &\textbf{2.00}  \\   
    \hline
    \end{tabular}%
    }
  \label{tab:num1}%
\end{table*}%

\subsection{Simulation Analysis}
Simulated data is used to test different methods under error-controlled conditions. The simulations are implemented using the assumed platform's attitude and the actual satellite orbit information in the GPS Yuma Almanacs file on November 7, 2021. Several scenarios are simulated with a different number of tracked satellites, a different number of antennas, and various measurement noises. The number of satellites is varied from four to eight, i.e., from the satellite-deprived situations to satellite-sufficient environments, by randomly choosing the satellites from the visible ones. The noise level is controlled by adding a zero-mean Gaussian noise with a specified standard deviation to the undifferenced GNSS observations. We assume that the noise of pseudo-range data is two orders of magnitude higher than that of the carrier phase, that is, $\sigma_{_\mathbf{P}} = 100 \sigma_{_\Psi} $. For each simulated scenario, a set of $10^4$ data is generated, with a random rotation and random antenna positions in the body frame in each trial. We let the baseline length be 1 meter and guarantee the antenna position matrix to be full row rank.

Firstly, we will study the quality of various float ambiguity solutions, i.e., ${\mathbf{\hat N}_{_\text{UC}}}$, ${\mathbf{\hat N}_{_\text{AC}}}$ and ${\mathbf{\hat N}_{_\text{RM}}}$ obtained by \eref{eq:float1}, \eref{eq:float2}, and \eref{eq:floatrm}, respectively. We characterize the quality of the float solution in terms of two metrics. The first metric is the root-mean-square error (RMSE) that captures the differences between the float ambiguity solutions and the ground truth integer ambiguities. The second metric is defined as the success rate of the optimal integers of \eref{eq:lambda1}, \eref{eq:lambda2}, and \eref{eq:simt2}, which gives us an idea about how good the closest integer estimations to different float solutions in the metric of the associated weight matrix.

\fref{fig:AveFloatErr1} and \fref{fig:AveFloatErr2} demonstrate the RMSE of the float ambiguities in various simulated environments. When the number of baselines (BL) is less or equal to the dimension of the range of the baseline matrix, the UC-AM and AC-AM have exactly the same float solution, that is, ${\mathbf{\hat N}_{_\text{UC}}} = {\mathbf{\hat N}_{_\text{AC}}}$, resulting in the same fixed integer solutions of \eref{eq:lambda1} and \eref{eq:lambda2}. Hence, \fref{fig:AveFloatErr1} shows only the LS solutions and Riemannian-manifold-based (RieM) solutions.  According to \fref{fig:AveFloatErr1}, the more baselines we use, the larger the error of the LS estimation. On the contrary, the RieM-based errors decrease as the number of baselines increases. The reason why multi-baseline configuration improves the RieM-based float solution is that the set-up benefits from more antenna-array geometry information.

As shown in \fref{fig:AveFloatErr2}, when the number of baselines is greater than the dimension of the range of the baseline matrix, the quality of the float solutions for the UC-AM tends to converge to a stable level. However, the AC-AM demonstrates its advantage compared with the UC-AM since the affine transformation (linear constraint) can take advantage of the redundancy of the GNSS data. To elaborate, we can analyze each model's number of observations and unknowns. For the UC-AM, there are $2{ \mathcal{A} \mathcal{S} }$ observations and ${ \mathcal{A} \mathcal{S} } + 3{ \mathcal{A}}$ unknown parameters. The AC-AM involves the same measurements as the UC-AM, with only ${ \mathcal{A} \mathcal{S} } + 9$ unknowns ($9< 3{ \mathcal{A}}$). The RieMOCAD-LF has the same number of observables and unknowns as the AC-AM. However, the additional nonlinear constraint explains why the RieMOCAD-LF outperforms the other two benchmarks. Unlike the UC-AM, the performance of the AC-AM and RieMOCAD-LF improves as the number of baselines rises. 

\tref{tab:suc1} and \tref{tab:suc2} list the success rate of the UC-AM, AC-AM and RieMOCAD-LF, that is, the optimums of \eref{eq:lambda1}, \eref{eq:lambda2}, and \eref{eq:simt2}. The success rate of the UC-AM continuously decreases as the number of baselines increases caused by the low-quality float solution coupled with the growth of dimension of the integer space. Generally, the AC-AM and RieMOCAD-LF tend to achieve a higher success rate when more baselines are considered since the improved float solutions are offered. However, the performance in some setups may not follow the trend. It depends on which factor, the quality of the float solution, or the dimension of the integer space, dominates the results. As shown in \tref{tab:suc2}, the AC-AM and RieMOCAD-LF can offer acceptable results in many scenarios (except for the extreme ones, where the tight-form solutions of the OC-AM are required) when using a large number of baselines. \fref{fig:AveFloatErr1}, \fref{fig:AveFloatErr2}, \tref{tab:suc1} and \tref{tab:suc2} indicate the significant improvement of the RieM-based float solution compared with the LS solution. 


According to \tref{tab:suc1} and \tref{tab:suc2}, the estimations of the UC-AM, AC-AM, and RieMOCAD-LF are not up to par in many challenging environments, especially when the number of baselines is less than the dimension of the range space of baseline matrix. Therefore, the robust model OC-AM is required to improve the results. Both \eref{eq:lambda3} and \eref{eq:fdecom1} are just two alternative forms of the optimization \eref{eq:ocam}. Therefore, if the search space is large enough, \eref{eq:ocam}, \eref{eq:lambda3} and \eref{eq:fdecom1} should achieve identical optima.

This work aims to speed up the search using the proposed intermediate solution to shrink the search space required. To avoid an enormous search space caused by the adaptive adjustment procedures discussed in \sref{sec:search}, we set up $10^4$ integer candidates as the limit to evaluate. \tref{tab:suc3} demonstrates the success rate for the MC-LAMBDA method and RieMOCAD-TF, i.e., \eref{eq:lambda3} and \eref{eq:fdecom1}, under different simulated environments. \tref{tab:num1} summarizes the corresponding number of integer candidates that need to evaluate. In a single-baseline configuration, RieMOCAD-TF and the MC-LAMBDA method have a comparable success rate, except for the extremely challenging set-up with $\sigma_{_\Psi} = 7 \text{ mm}$, $\sigma_{_\mathbf{P}} = 70 \text{ cm}$ and 4 satellites, where RieMOCAD-TF offers a better success-rate performance. However, the required search space of RieMOCAD-TF is smaller than that of the MC-LAMBDA method, indicating an improvement in computational efficiency. In dual- or triple-baseline cases, it is more difficult to resolve the integer ambiguities due to the high dimensionality of the integer space. In such situations, the advantages of the proposed approach become more notable. As stated above, the different success rate derives from the limit of the size of the search space, which also verifies the low complexity of RieMOCAD-TF. In summary, integrating Riemannian algorithms into the process of ambiguity resolution can shrink the search space in the integer domain at the expense of just an additional nonlinear optimization.


\begin{figure}[htbp]
\centering
\includegraphics[width=0.45\textwidth]{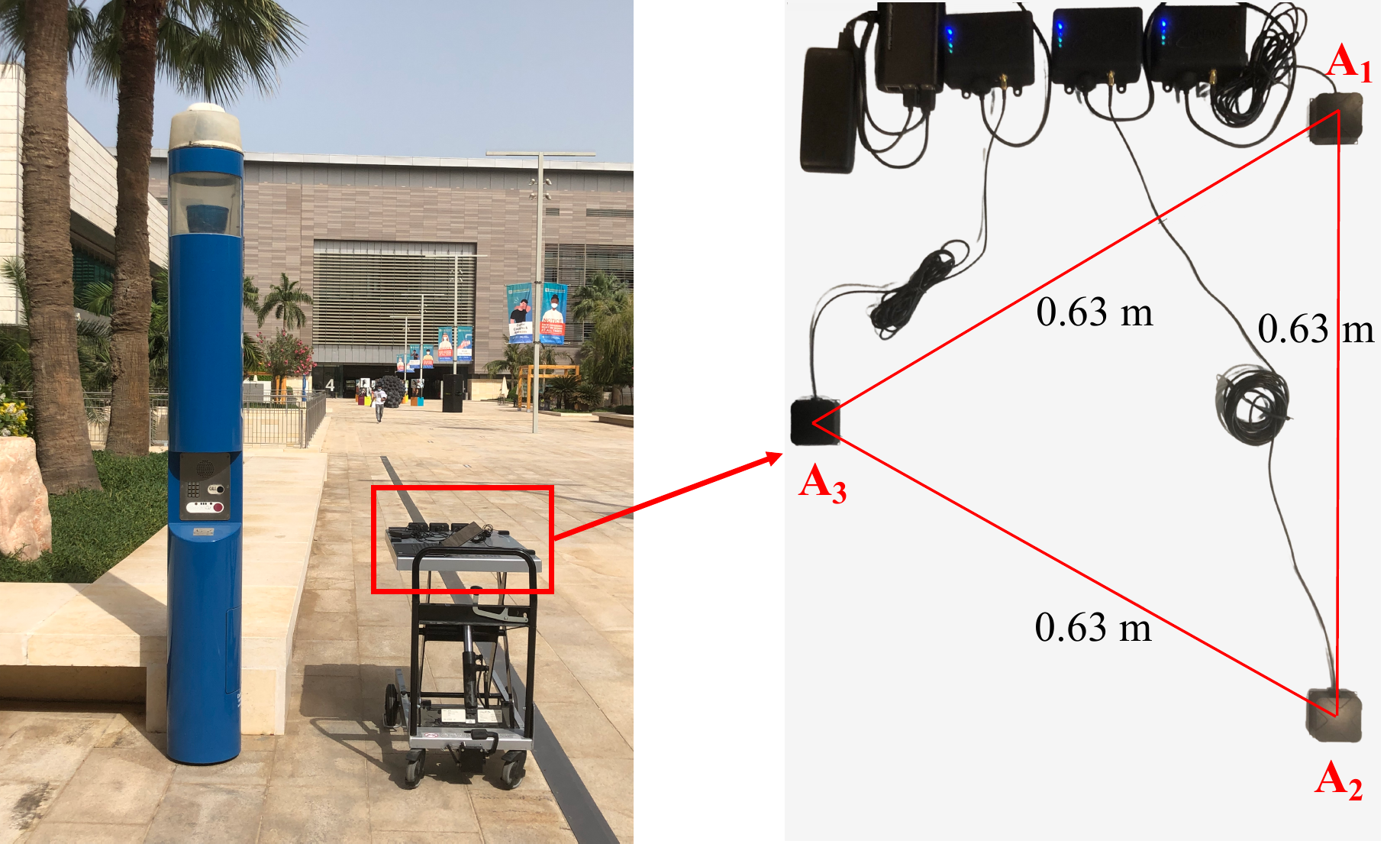}
\caption{GNSS antenna set-up.}
\label{fig:antenna_array}
\end{figure}

\begin{figure}[htbp]
\centering
\subfigure[1st experiment] { \label{fig:visiba} 
\includegraphics[width=0.46\columnwidth]{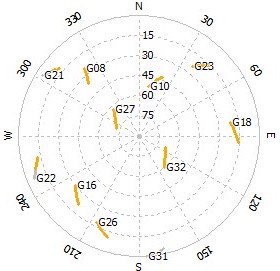}
} 
\subfigure[2nd experiment] { \label{fig:visibb} 
\includegraphics[width=0.46\columnwidth]{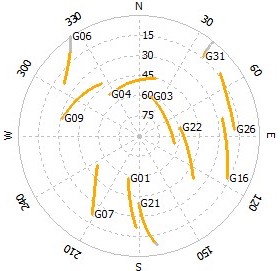} 
} 
\caption{Satellite visibility during the tests.}
\label{fig:visib}
\end{figure}

\begin{table}[htbp]
\centering
\caption{Success rate (\%) based on experimental data.}
\label{tab:expsuc}
\centering
\setlength{\tabcolsep}{1mm}{
\begin{tabular}{c| c| c}
\hline
&1st experiment &2nd experiment\\
   \hline
UC-AM/AC-AM &0.01 &50.06\\
\hline
RieMOCAD-LF &5.83 &86.81\\
\hline
MC-LAMBDA &6.75 &99.28\\
\hline
RieMOCAD-TF &\textbf{72.07} &\textbf{99.33}\\
\hline
\end{tabular}
}
\end{table} 
\begin{table}[htbp]
\centering
\caption{Number of integers in the search space.}
\label{tab:numexp}
\centering
\setlength{\tabcolsep}{1mm}{
\begin{tabular}{c| c| c}
\hline
&1st experiment &2nd experiment\\
   \hline
MC-LAMBDA &9891.17 &1118.65\\
\hline
RieMOCAD-TF &\textbf{8649.51} &\textbf{795.02}\\
\hline
\end{tabular}
}
\end{table}

\begin{figure*}[htbp]
\centering
\includegraphics[width=1.\textwidth]{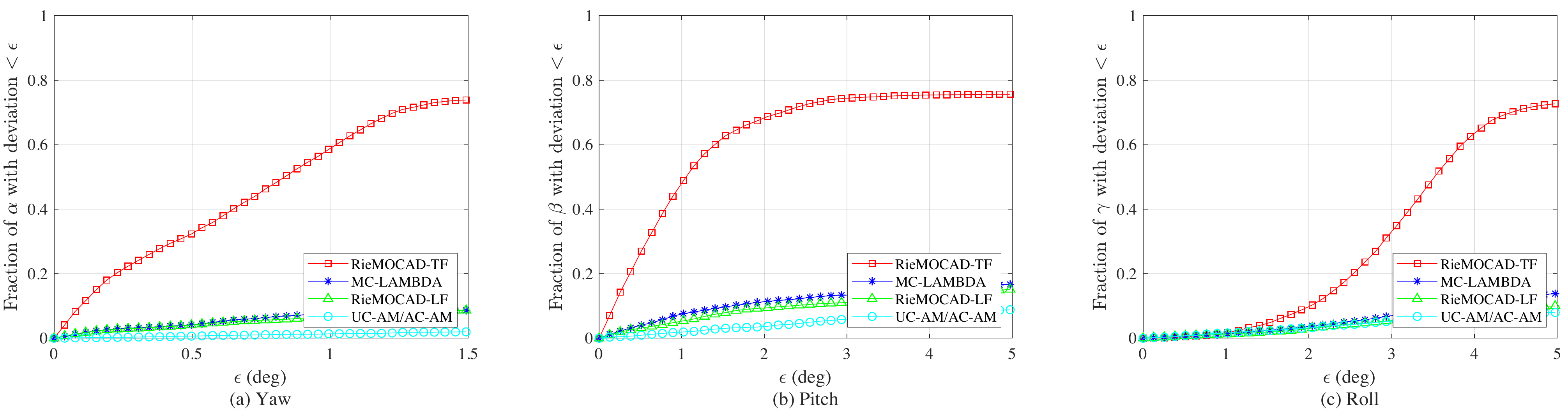}
\caption{Euler angle error distribution for the first experiment.}
\label{fig:firstexperiment}
\end{figure*}
\begin{figure*}[htbp]
\centering
\includegraphics[width=1.\textwidth]{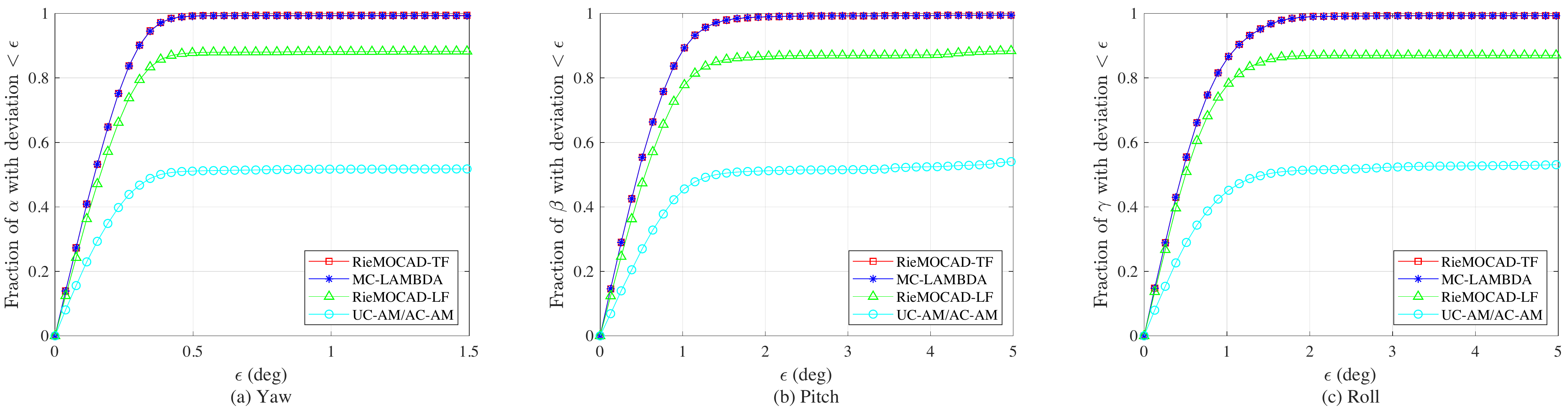}
\caption{Euler angle error distribution for the second experiment.}
\label{fig:secondexperiment}
\end{figure*}

\subsection{Experimental Evaluation}
In addition to the simulations, we evaluate the proposed method's performance using experimental data and compare it with other benchmarks. To estimate the ground truth precisely, we carried out two static experiments using three ANAVS multi-sensor modules. The receivers were arranged in an equilateral triangle with a 0.63-meter side, as shown in \fref{fig:antenna_array}. The receivers were firmly mounted on the platform, which remained stationary at positions (N$22^\circ 18^\prime 34''$, E$39^\circ 6^\prime 16''$) and (N$22^\circ 18^\prime 32''$, E$39^\circ 6^\prime 27''$) during the two experiments, respectively. These receivers can collect GNSS data at a 5 Hz sampling rate. The first experiment was performed from 18:35 to 19:03 (UTC) on 6 April 2021, with 8500 epochs of GNSS data collected. The other experiment was carried out from 20:34 to 21:50 (UTC) on 25 May 2021 to acquire 25500 epochs of GNSS measurements. The satellite visibility during the tests is shown in \fref{fig:visib}, with a 10-degree cut-off angle. We estimate the ground truth by using the data of the entire test period and fusing data from multiple sensors, including GNSS observations (GPS+GLONASS) and data from an inertial measurement unit (IMU).

\tref{tab:expsuc} shows the success rate of ambiguity resolution for both experiments. The estimations in the second experiment are much better since we fixed the platform at a place with good satellite visibility. The location of the first experiment was surrounded by obstacles, such as trees and buildings. We can also see that the AC-AM (the same as UC-AM) shows a poor success rate. As expected, RieMOCAD-LF offers a higher success rate compared to the AC-AM. The two tight-form implementations of the OC-AM, i.e., the MC-LAMBDA method and RieMOCAD-TF, beat other techniques in ambiguity resolution success rate, with RieMOCAD-TF performing the best. As aforesaid, the constrained search space in the integer domain decides the difference in ambiguity resolution performance between RieMOCAD-TF and the MC-LAMBDA method. The size of the search space, i.e., the number of integer candidates as summarized in \tref{tab:numexp}, reveals the low complexity of RieMOCAD-TF.

\fref{fig:firstexperiment} and \fref{fig:secondexperiment} show the cumulative errors of the attitude angles (Euler angles). Consistent with the success-rate performance described above, RieMOCAD-LF outperforms the AC-AM. On the other hand, RieMOCAD-TF outperforms the corresponding benchmark, the MC-LAMBDA method. The success rate, size of the search space, and the angle error distribution prove the reliability and feasibility of the proposed method.


\section{CONCLUSION}
We proposed an ambiguity resolution method for GNSS attitude determination. By modeling the GNSS attitude problem as an optimization on the Stiefel manifold, we alleviate the difficulties to resolve the unknown carrier-phase ambiguities. First, to replace the widely used least-squares solution, we calculate a high-quality float solution using Riemannian manifold optimization as an improved starting point for the integer search process. To take advantage of the improved intermediate solution, we propose decomposing the objective function at an arbitrary point instead of the widely used orthogonal decomposition. By leveraging Riemannian optimization, the improved float solution, and the proposed decomposition, we are able to search for the carrier-phase ambiguities in the integer domain with high efficiency and reliability. The presented algorithm has two variants: a loose form and a tight form known as RieMOCAD-LF and RieMOCAD-TF, respectively. They are compared to relevant state-of-the-art methods using both simulation and experimental data. We evaluate performance in terms of key performance indicators such as success rate, search space size, and attitude error distribution. The results confirm the feasibility and effectiveness of the proposed approach.

\section*{APPENDIX A}
\label{sec:apA}
\setcounter{equation}{0}
\setcounter{subsection}{0}
\renewcommand{\theequation}{A.\arabic{equation}}
\renewcommand{\thesubsection}{A.\arabic{subsection}}

\emph{Proof of \lref{le:decomfloat}:} We first vectorize all the unknown parameters in a single vector ${\bm{\theta}} = 
 \mathrm{vec} \left ( \left[\mathbf{R} \text{ } \mathbf{N} \right] \right )$ and define the objective function as
\begin{equation}
\mathcal{F}\! \left ( {\bm{\theta}}  \right ) = \left\| {\operatorname{vec}\!\left( {{\mathbf{Y}} - {\mathbf{AR}}{{\mathbf{X}}_b} - {\mathbf{BN}}} \right)} \right\|_{\mathbf{Q}_{\mathbf{Y}}^{-1}}^2.
\end{equation}
To achieve the LS solution ${\bm{\hat \theta}} = 
 \mathrm{vec} \left ( \left[\mathbf{\hat R}_{_\text{AC}} \text{ } \mathbf{\hat N}_{_\text{AC}} \right] \right )$, we let the first order derivative equal to zero \cite{6164274}
\begin{equation}
\mathcal{F}'\! \left ( {\bm{\hat \theta}}  \right ) = {\mathbf{0}},
\end{equation}
that is
\begin{equation}
\mathbf{M} \cdot {\bm{\hat \theta}} = \begin{bmatrix}
  \mathbf{X}_{b}\mathbf{P}^{-1}\otimes \mathbf{A}^{\mathrm{T} }\mathbf{Q}_\mathbf{y}^{-1}\\
  \mathbf{P}^{-1}\otimes \mathbf{B}^{\mathrm{T} }\mathbf{Q}_\mathbf{y}^{-1}
\end{bmatrix}\mathrm{vec} \left ( \mathbf{Y}  \right ),
\label{eq:A3}
\end{equation}
with
\begin{equation}
\resizebox{.86\hsize}{!}{
$
\mathbf{M} = \begin{bmatrix}
  \mathbf{X}_{b}\mathbf{P}^{-1}\otimes \mathbf{A}^{\mathrm{T} }\mathbf{Q}_\mathbf{y}^{-1}\mathbf{B}  & \mathbf{X}_{b}\mathbf{P}^{-1}\mathbf{X}_{b}^{\mathrm{T} }\otimes \mathbf{A}^{\mathrm{T} }\mathbf{Q}_\mathbf{y}^{-1}\mathbf{A}\\
  \mathbf{P}^{-1}\otimes \mathbf{B}^{\mathrm{T} }\mathbf{Q}_\mathbf{y}^{-1}\mathbf{B}  & \mathbf{P}^{-1}\mathbf{X}_{b}^{\mathrm{T} }\otimes \mathbf{B}^{\mathrm{T} }\mathbf{Q}_\mathbf{y}^{-1}\mathbf{A}
\end{bmatrix},
$}
\label{eq:A4}
\end{equation}
\begin{equation}
{\mathbf{Q}_{\mathbf{Y}}} = \mathbf{P} \otimes {\mathbf{Q}_{\mathbf{y}}} = \begin{bmatrix}
  1 &0.5  &\cdots  &0.5 \\
  0.5&1  &\cdots  &0.5 \\
  &  &\ddots  & \\
  0.5&0.5  &\cdots  &1
\end{bmatrix} \otimes {\mathbf{Q}_{\mathbf{y}}},
\end{equation}
where ${\mathbf{Q}_{\mathbf{y}}}$ is the covariance matrix of GNSS observations for a single baseline, and $\otimes$ indicates the Kronecker product.

Since $\mathcal{F}\! \left ( {\bm{\theta}}  \right )$ is a quadratic form of $ {\bm{\theta}}$, its third or higher order derivatives are all equal to zero, i.e., $\mathcal{F}^{\left(3 \right)}\! \left ( {\bm{\theta}}  \right ) = \mathcal{F}^{\left(4 \right)}\! \left ( {\bm{\theta}}  \right ) = \cdots = \mathcal{F}^{\left(n \right)}\! \left ( {\bm{\theta}}  \right ) = \mathbf{0}$. Then, we can expand $\mathcal{F}\! \left ( {\bm{\theta}}  \right )$ as a Taylor series at ${\bm{\hat \theta}}$
\begin{equation}
\mathcal{F}\! \left ( {\bm{\theta}}  \right ) = \mathcal{F}\! \left ( {\bm{\hat \theta}}  \right ) + \frac{1}{2}\left ( {\bm{\theta}} - {\bm{\hat \theta}}  \right )^{\text{T}}\mathcal{F}''\! \left ( {\bm{\hat \theta}}  \right )\left ( {\bm{\theta}} - {\bm{\hat \theta}}  \right ),
\label{aeq:objf}
\end{equation}
with
\begin{equation*}
\mathcal{F}''\! \left ( {\bm{\hat \theta}}  \right ) = 2{\mathbf{M}}.
\end{equation*}
To rewrite the second term on the right-hand side of \eref{aeq:objf} as a sum of squares terms, we define the block-triangular transformation
\begin{equation}
\mathbf{T}_1 = \begin{bmatrix}
  \mathbf{I} &\mathbf{O}\\
  \mathbf{I} \otimes \mathbf{B}^{+}\mathbf{A} &\mathbf{I}
\end{bmatrix}.
\end{equation}
Then we obtain
\begin{equation}
\mathbf{T}_1\left ( {\bm{\theta}} - {\bm{\hat \theta}}  \right ) = \begin{bmatrix}
 \mathrm{vec} \!\!\left ( \!\mathbf{R} -\mathbf{\hat R}_{_\text{AC}}\!\right ) \\
\mathrm{vec} \!\!\left (\! \mathbf{N} - \mathbf{\hat N}_{_\text{AC}}\!\left (\mathbf{R} \right ) \!\right )
\end{bmatrix},
\end{equation}
and
\begin{equation}
\mathbf{T}_1^{-\text{T}}\mathbf{M}\mathbf{T}_1^{-1} = \begin{bmatrix}
 {\mathbf{Q}_{{\mathbf{\hat R}_{_\text{AC}}}\!{\mathbf{\hat R}_{_\text{AC}}}}^{-1}} &\mathbf{O}\\
\mathbf{O} & {\mathbf{Q}_{{\mathbf{\hat N}_{_\text{AC}}}\!\left({\mathbf{R}}\right){\mathbf{\hat N}_{_\text{AC}}}\!\left({\mathbf{R}}\right)}^{-1}}
\end{bmatrix}.
\end{equation}
Based on the transformation, we have
\begin{equation}
\resizebox{.83\hsize}{!}{
$
\begin{aligned}
\mathcal{F}\! \left ( {\bm{\theta}}  \right )
 = &\mathcal{F}\! \left ( {\bm{\hat \theta}}  \right ) \!+\! \left ( \!{\bm{\theta}} \!-\! {\bm{\hat \theta}}  \!\right )^{\text{T}}\mathbf{M}\left ( \!{\bm{\theta}} \!- \!{\bm{\hat \theta}} \! \right )\\
 = &\mathcal{F}\! \left ( {\bm{\hat \theta}}  \right ) \!+\! \left ( \!{\bm{\theta}} \!-\! {\bm{\hat \theta}}  \!\right )^{\text{T}}\mathbf{T}_1^{\text{T}}\mathbf{T}_1^{-\text{T}}\mathbf{M}\mathbf{T}_1^{-1} \mathbf{T}_1\left ( \!{\bm{\theta}} \!- \!{\bm{\hat \theta}}  \!\right )\\
  = &\mathcal{F}\! \left ( {\bm{\hat \theta}}  \right ) \!+\! \left\| \mathrm{vec} \!\left ( \mathbf{R} \!-\!\mathbf{\hat R}_{_\text{AC}}\!\right )\! \right\|_{\mathbf{Q}_{{\mathbf{\hat R}_{_\text{AC}}}\!{\mathbf{\hat R}_{_\text{AC}}}}^{-1}}^2 \\ &+\! \left\| \mathrm{vec} \!\left ( \mathbf{N} \!-\!\mathbf{\hat N}_{_\text{AC}}\!\!\left (\mathbf{R} \right )\!\right )\! \right\|_{\mathbf{Q}_{{\mathbf{\hat N}_{_\text{AC}}}\!\left({\mathbf{R}}\right){\mathbf{\hat N}_{_\text{AC}}}\!\left({\mathbf{R}}\right)}^{-1}}^2.
\end{aligned}
$}
\end{equation}

\section*{APPENDIX B}
\label{sec:apB}
\setcounter{equation}{0}
\setcounter{subsection}{0}
\renewcommand{\theequation}{B.\arabic{equation}}
\renewcommand{\thesubsection}{B.\arabic{subsection}}

\emph{Proof of \lref{th:decom}:} Consider an arbitrary point ${\bm{\bar \theta}} =
 \mathrm{vec} \left ( \left[\mathbf{\bar R} \text{ } \mathbf{\bar N} \right] \right )$. According to \eref{aeq:objf}, we have
\begin{equation}
\resizebox{.83\hsize}{!}{
$
\begin{aligned}
\mathcal{F}\! \left ( {\bm{\theta}}  \right ) = &\mathcal{F}\! \left ( {\bm{\hat \theta}}  \right ) \!\!+\!\! \left ( \!{\bm{\theta}} \!-\! {\bm{\bar \theta}} \!+\! {\bm{\bar \theta}} \!-\! {\bm{\hat \theta}} \! \right )^{\text{T}}\!\mathbf{M}\!\left (\! {\bm{\theta}} \!- \!{\bm{\bar \theta}} \!+\! {\bm{\bar \theta}} \!- \!{\bm{\hat \theta}}  \!\right )\\
 = &\mathcal{F}\! \left ( {\bm{\hat \theta}}  \right ) \!+\! \left (\! {\bm{\bar \theta}}\! -\! {\bm{\hat \theta}} \! \right )^{\text{T}}\!\mathbf{M}\!\left (\! {\bm{\bar \theta}}\! -\! {\bm{\hat \theta}}  \!\right ) \\
 &+ \!\!\left ( \!{\bm{\theta}} \!- \!{\bm{\bar \theta}}   \!\right )^{\text{T}}\!\mathbf{M}\!\left ( \!{\bm{\theta}} \!- \!{\bm{\bar \theta}}  \!\right )  \!\!+\!\! 2\left ( \!{\bm{\theta}} \!-\! {\bm{\bar \theta}} \!\right )^{\text{T}}\!\mathbf{M}\!\left ( \!{\bm{\bar \theta}} \!- \!{\bm{\hat \theta}}  \!\right )\\
 = &\mathcal{F}\! \left ( {\bm{\bar \theta}}  \right ) \!\! +\!\! \left (\! {\bm{\theta}} \!-\! {\bm{\bar \theta}}  \! \right )^{\text{T}}\!\mathbf{M}\!\left ( \!{\bm{\theta}}\!-\! {\bm{\bar \theta}} \! \right ) \!\!+ \!\!2\left ( \!{\bm{\theta}} \!-\! {\bm{\bar \theta}} \!\right )^{\text{T}}\!\mathbf{M}\!\left (\! {\bm{\bar \theta}} \!-\! {\bm{\hat \theta}}  \!\right ).
\end{aligned}
$}
\label{aeq:thpf1}
\end{equation}
Again, we define the block-triangular transformation
\begin{equation}
\mathbf{T}_2 = \begin{bmatrix}
  \mathbf{I} &\left(\mathbf{X}_b^{+} \right)^{\text{T}} \otimes \mathbf{A}^{+}\mathbf{B}\\
  \mathbf{O} &\mathbf{I}
\end{bmatrix}.
\end{equation}
Then we have
\begin{equation}
\mathbf{T}_2\left ( {\bm{\theta}} - {\bm{\bar \theta}}  \right ) = \begin{bmatrix}
 \mathrm{vec} \!\!\left ( \!\mathbf{R} -\mathbf{\bar R}\!\left (\mathbf{N} \right )\!\right ) \\
\mathrm{vec} \!\!\left (\! \mathbf{N} - \mathbf{\bar N} \!\right )
\end{bmatrix},
\end{equation}
\begin{equation}
\mathbf{T}_2\left ( {\bm{\bar \theta}} - {\bm{\hat \theta}}  \right ) = \begin{bmatrix}
 \mathrm{vec} \!\!\left ( \!\mathbf{\bar R} -\mathbf{\hat R}_{_\text{AC}}\!\left (\mathbf{\bar N} \right )\!\right ) \\
\mathrm{vec} \!\!\left (\! \mathbf{\bar N} - \mathbf{\hat N}_{_\text{AC}} \!\right )
\end{bmatrix},
\end{equation}
\begin{equation}
\mathbf{T}_2^{-\text{T}}\mathbf{M}\mathbf{T}_2^{-1} = \begin{bmatrix}
 {\mathbf{Q}_{{\mathbf{\hat R}_{_\text{AC}}\!\left({\mathbf{N}}\right)}\!{\mathbf{\hat R}_{_\text{AC}}\!\left({\mathbf{N}}\right)}}^{-1}} &\mathbf{O}\\
\mathbf{O} & {\mathbf{Q}_{{\mathbf{\hat N}_{_\text{AC}}}{\mathbf{\hat N}_{_\text{AC}}}}^{-1}}
\end{bmatrix}.
\end{equation}
Therefore, we obtain
\begin{equation}
\resizebox{.83\hsize}{!}{
$
\begin{aligned}
&\left ( {\bm{\theta}} - {\bm{\bar \theta}}   \right )^{\text{T}}\mathbf{M}\left ( {\bm{\theta}} \!-\! {\bm{\bar \theta}}  \right ) \\
= &\left ( {\bm{\theta}} - {\bm{\bar \theta}}   \right )^{\text{T}}\mathbf{T}_2^{\text{T}}\mathbf{T}_2^{-\text{T}}\mathbf{M}\mathbf{T}_2^{-1}\mathbf{T}_2\left ( {\bm{\theta}} \!-\! {\bm{\bar \theta}}  \right ) \\
= &\! \left\| \!\mathrm{vec}\! \left ( \!\mathbf{N} \!-\!\mathbf{\bar N} \!\right ) \!\right\|^2_ {\mathbf{Q}_{{\mathbf{\hat N}_{_\text{AC}}}{\mathbf{\hat N}_{_\text{AC}}}}^{-1}} \!\!\!+\! \left\| \!\mathrm{vec} \!\left ( \!\mathbf{R} \!-\!\mathbf{\bar R}\left (\!\mathbf{N}\! \right )\!\right ) \!\right\|^2_{\mathbf{Q}_{{\mathbf{\hat R}_{_\text{AC}}\!\left({\mathbf{N}}\right)}\!{\mathbf{\hat R}_{_\text{AC}}\!\left({\mathbf{N}}\right)}}^{-1}},
\end{aligned}
$}
\label{aeq:thpf2}
\end{equation}
and
\begin{equation}
\resizebox{.8\hsize}{!}{
$
\begin{aligned}
&\left ( {\bm{\theta}} \!- \!{\bm{\bar \theta}} \right )^{\text{T}}\mathbf{M}\left ( {\bm{\bar \theta}} - {\bm{\hat \theta}}  \right ) \\
= &\left ( {\bm{\theta}} \!- \!{\bm{\bar \theta}} \right )^{\text{T}}\mathbf{T}_2^{\text{T}}\mathbf{T}_2^{-\text{T}}\mathbf{M}\mathbf{T}_2^{-1}\mathbf{T}_2\left ( {\bm{\bar \theta}} - {\bm{\hat \theta}}  \right ) \\
=  &\mathrm{vec} \!\left ( \!\mathbf{N} \!-\!\mathbf{\bar N} \!\right )^{\text{T}}{\mathbf{Q}_{{\mathbf{\hat N}_{_\text{AC}}}{\mathbf{\hat N}_{_\text{AC}}}}^{-1}} \! \mathrm{vec}\! \left ( \!\mathbf{\bar N} \!-\!\mathbf{\hat N}_{_\text{AC}} \!\right ) \\
&+ \mathrm{vec}\! \left (\! \mathbf{R} \!-\!\mathbf{\bar R}\!\left (\!\mathbf{N}\! \right )\!\right )^{\text{T}}{\mathbf{Q}_{{\mathbf{\hat R}_{_\text{AC}}\!\left({\mathbf{N}}\right)}\!{\mathbf{\hat R}_{_\text{AC}}\!\left({\mathbf{N}}\right)}}^{-1}}\mathrm{vec} \!\left (\! \mathbf{\bar R} \!-\!\mathbf{\hat R}_{_\text{AC}}\!\left (\!\mathbf{\bar N} \!\right )\!\right ).
\end{aligned}
$}
\label{aeq:thpf3}
\end{equation}
The combination of \eref{aeq:thpf1}, \eref{aeq:thpf2} and \eref{aeq:thpf3} concludes the proof.

\section*{APPENDIX C}
\label{sec:apC}
\setcounter{equation}{0}
\setcounter{subsection}{0}
\renewcommand{\theequation}{C.\arabic{equation}}
\renewcommand{\thesubsection}{C.\arabic{subsection}}

\emph{Proof of \lref{le:decom}:} According to \lref{th:decom}, $\mathcal{F}\! \left ( {\bm{\theta}}  \right )$ can be expressed as 
\begin{equation}
\resizebox{.84\hsize}{!}{
$
\begin{aligned}
\mathcal{F}\! \left ( {\bm{\theta}}  \right )
 = &\mathcal{F}\! \left ( {\bm{\bar \theta}}  \right ) \!\! +\!\! \left\| \!\mathrm{vec}\! \left ( \!\mathbf{N} \!-\!\mathbf{\bar N} \!\right ) \!\right\|^2_ {\mathbf{Q}_{{\mathbf{\hat N}_{_\text{AC}}}{\mathbf{\hat N}_{_\text{AC}}}}^{-1}} \!\!\!+\! \left\| \!\mathrm{vec} \!\left ( \!\mathbf{R} \!-\!\mathbf{\bar R}\!\left (\!\mathbf{N}\! \right )\!\right ) \!\right\|^2_{\mathbf{Q}_{{\mathbf{\hat R}_{_\text{AC}}\!\left({\mathbf{N}}\right)}\!{\mathbf{\hat R}_{_\text{AC}}\!\left({\mathbf{N}}\right)}}^{-1}}\\
 &+ \!2\mathrm{vec} \!\left ( \!\mathbf{N} \!-\!\mathbf{\bar N} \!\right )^{\text{T}}{\mathbf{Q}_{{\mathbf{\hat N}_{_\text{AC}}}{\mathbf{\hat N}_{_\text{AC}}}}^{-1}} \! \mathrm{vec}\! \left ( \!\mathbf{\bar N} \!-\!\mathbf{\hat N}_{_\text{AC}} \!\right ) \\
&+\! 2\mathrm{vec}\! \left (\! \mathbf{R} \!-\!\mathbf{\bar R}\!\left (\!\mathbf{N}\! \right )\!\right )^{\text{T}}{\mathbf{Q}_{{\mathbf{\hat R}_{_\text{AC}}\!\left({\mathbf{N}}\right)}\!{\mathbf{\hat R}_{_\text{AC}}\!\left({\mathbf{N}}\right)}}^{-1}}\mathrm{vec} \!\left (\! \mathbf{\bar R} \!-\!\mathbf{\hat R}_{_\text{AC}}\!\left (\!\mathbf{\bar N} \!\right )\!\right )\\
&+ \!\left\| \!\mathrm{vec} \!\left (\! \mathbf{\bar R} \!-\!\mathbf{\hat R}\left (\!\mathbf{\bar N} \!\right )\!\right )\! \right\|^2_{\mathbf{Q}_{{\mathbf{\hat R}_{_\text{AC}}\!\left({\mathbf{N}}\right)}\!{\mathbf{\hat R}_{_\text{AC}}\!\left({\mathbf{N}}\right)}}^{-1}}\! -\! \left\| \!\mathrm{vec} \!\left (\! \mathbf{\bar R} \!-\!\mathbf{\hat R}\!\left (\!\mathbf{\bar N} \!\right )\!\right ) \!\right\|^2_{\mathbf{Q}_{{\mathbf{\hat R}_{_\text{AC}}\!\left({\mathbf{N}}\right)}\!{\mathbf{\hat R}_{_\text{AC}}\!\left({\mathbf{N}}\right)}}^{-1}}.
\end{aligned}
$}
\label{aeq:lempf1}
\end{equation}
Note that 
\begin{equation}
\resizebox{.84\hsize}{!}{
$
\begin{aligned}
&\left\| \!\mathrm{vec} \!\left ( \!\mathbf{R} \!-\!\mathbf{\hat R}_{_\text{AC}}\!\left (\!\mathbf{N}\! \right )\!\right ) \!\right\|^2_{\mathbf{Q}_{{\mathbf{\hat R}_{_\text{AC}}\!\left({\mathbf{N}}\right)}\!{\mathbf{\hat R}_{_\text{AC}}\!\left({\mathbf{N}}\right)}}^{-1}}\\
=&\left\| \!\mathrm{vec} \!\left ( \!\mathbf{R} \!-\!\mathbf{\bar R}\left (\!\mathbf{N}\! \right ) \!+\!\mathbf{\bar R}\left (\!\mathbf{N}\! \right )\!-\!\mathbf{\hat R}_{_\text{AC}}\!\left (\!\mathbf{N}\! \right )\!\right ) \!\right\|^2_{\mathbf{Q}_{{\mathbf{\hat R}_{_\text{AC}}\!\left({\mathbf{N}}\right)}\!{\mathbf{\hat R}_{_\text{AC}}\!\left({\mathbf{N}}\right)}}^{-1}}\\
=&\left\| \!\mathrm{vec} \!\left ( \!\mathbf{R} \!-\!\mathbf{\bar R}\left (\!\mathbf{N}\! \right )\!\right ) \!\right\|^2_{\mathbf{Q}_{{\mathbf{\hat R}_{_\text{AC}}\!\left({\mathbf{N}}\right)}\!{\mathbf{\hat R}_{_\text{AC}}\!\left({\mathbf{N}}\right)}}^{-1}} + \!\left\| \!\mathrm{vec} \!\left (\! \mathbf{\bar R}\!\!\left (\!\mathbf{N} \!\right ) \!-\!\mathbf{\hat R}\left (\!\mathbf{N} \!\right )\!\right )\! \right\|^2_{\mathbf{Q}_{{\mathbf{\hat R}_{_\text{AC}}\!\left({\mathbf{N}}\right)}\!{\mathbf{\hat R}_{_\text{AC}}\!\left({\mathbf{N}}\right)}}^{-1}}\\
&+\! 2\mathrm{vec}\! \left (\! \mathbf{R} \!-\!\mathbf{\bar R}\!\left (\!\mathbf{N}\! \right )\!\right )^{\text{T}}{\mathbf{Q}_{{\mathbf{\hat R}_{_\text{AC}}\!\left({\mathbf{N}}\right)}\!{\mathbf{\hat R}_{_\text{AC}}\!\left({\mathbf{N}}\right)}}^{-1}}\mathrm{vec} \!\left (\! \mathbf{\bar R}\!\left (\!\mathbf{N} \!\right ) \!-\!\mathbf{\hat R}_{_\text{AC}}\!\left (\!\mathbf{N} \!\right )\!\right ).
\end{aligned}
$}
\end{equation}
Since 
\begin{equation}
\mathbf{\bar R} \!-\!\mathbf{\hat R}_{_\text{AC}}\!\left (\!\mathbf{\bar N} \!\right ) = \mathbf{\bar R}\!\left (\!\mathbf{N} \!\right ) \!-\!\mathbf{\hat R}_{_\text{AC}}\!\left (\!\mathbf{N} \!\right ),
\end{equation}
we have
\begin{equation}
\resizebox{.84\hsize}{!}{
$
\begin{aligned}
&\left\| \!\mathrm{vec} \!\left ( \!\mathbf{R} \!-\!\mathbf{\hat R}_{_\text{AC}}\!\left (\!\mathbf{N}\! \right )\!\right ) \!\right\|^2_{\mathbf{Q}_{{\mathbf{\hat R}_{_\text{AC}}\!\left({\mathbf{N}}\right)}\!{\mathbf{\hat R}_{_\text{AC}}\!\left({\mathbf{N}}\right)}}^{-1}}\\
=&\left\| \!\mathrm{vec} \!\left ( \!\mathbf{R} \!-\!\mathbf{\bar R}\left (\!\mathbf{N}\! \right )\!\right ) \!\right\|^2_{\mathbf{Q}_{{\mathbf{\hat R}_{_\text{AC}}\!\left({\mathbf{N}}\right)}\!{\mathbf{\hat R}_{_\text{AC}}\!\left({\mathbf{N}}\right)}}^{-1}} + \!\left\| \!\mathrm{vec} \!\left (\! \mathbf{\bar R} \!-\!\mathbf{\hat R}\left (\!\mathbf{\bar N} \!\right )\!\right )\! \right\|^2_{\mathbf{Q}_{{\mathbf{\hat R}_{_\text{AC}}\!\left({\mathbf{N}}\right)}\!{\mathbf{\hat R}_{_\text{AC}}\!\left({\mathbf{N}}\right)}}^{-1}}\\
&+\! 2\mathrm{vec}\! \left (\! \mathbf{R} \!-\!\mathbf{\bar R}\!\left (\!\mathbf{N}\! \right )\!\right )^{\text{T}}{\mathbf{Q}_{{\mathbf{\hat R}_{_\text{AC}}\!\left({\mathbf{N}}\right)}\!{\mathbf{\hat R}_{_\text{AC}}\!\left({\mathbf{N}}\right)}}^{-1}}\mathrm{vec} \!\left (\! \mathbf{\bar R} \!-\!\mathbf{\hat R}_{_\text{AC}}\!\left (\!\mathbf{\bar N} \!\right )\!\right ).
\end{aligned}
$}
\label{aeq:lempf2}
\end{equation}
Finally, \eref{aeq:lempf2} combined with \eref{aeq:lempf1} concludes the proof.




\bibliographystyle{IEEEtran}
\bibliography{IEEEabrv,IEEEfull}

\end{document}